\renewcommand{\v}[1]{\ensuremath{\mathbf{#1}}} 
\newcommand{\gv}[1]{\ensuremath{\text{\boldmath$ #1 $}}}
\newcommand{\abs}[1]{\left| #1 \right|} 
\newcommand{\norm}[1]{\left\| #1 \right\|} 
\newcommand{\eff}{\mathrm{eff}}
\newcommand{\expr}{\mathrm{exp}}
\let\baraccent=\= 
\renewcommand{\=}[1]{\stackrel{#1}{=}} 
\newcommand{\thmref}[1]{\hyperref[#1]{Theorem~\ref{#1}}}
\newcommand{\lemmaref}[1]{\hyperref[#1]{Lemma~\ref{#1}}}
\newcommand{\figref}[1]{\hyperref[#1]{Fig.~\ref{#1}}}
\newcommand{\figaref}[1]{\hyperref[#1]{Fig.~\ref{#1}a}}
\newcommand{\figbref}[1]{\hyperref[#1]{Fig.~\ref{#1}b}}
\newcommand{\figcref}[1]{\hyperref[#1]{Fig.~\ref{#1}c}}
\renewcommand{\eqref}[1]{\hyperref[#1]{Eq.~(\ref{#1})}}
\newcommand{\eqsref}[2]{\hyperref[#1]{Eq.~(\ref{#1})-(\ref{#2})}}
\newcommand{\appref}[1]{\hyperref[#1]{Appx.~(\ref{#1})}}
\DeclareMathAlphabet{\mathpzc}{OT1}{pzc}{m}{it}
\newtheorem{theorem}{Theorem}
\newtheorem{lemma}{Lemma}
\newcommand{\frakb}{\boldsymbol{\mathfrak{b}}}
\newcommand{\frakh}{\boldsymbol{\mathfrak{h}}}
\begin{document}

\title{Ancilla-free quantum error correction codes for quantum metrology}

\author{David Layden}
\thanks{These two authors contributed equally.}
\affiliation{Research Laboratory of Electronics and Department of Nuclear Science and Engineering, Massachusetts Institute of Technology, Cambridge, Massachusetts 02139, USA}

\author{Sisi Zhou}
\thanks{These two authors contributed equally.}
\affiliation{Departments of Applied Physics and Physics, Yale University, New Haven, Connecticut 06511, USA}
\affiliation{Yale Quantum Institute, Yale University, New Haven, Connecticut 06511, USA}

\author{Paola Cappellaro}
\affiliation{Research Laboratory of Electronics and Department of Nuclear Science and Engineering, Massachusetts Institute of Technology, Cambridge, Massachusetts 02139, USA}

\author{Liang Jiang}
\affiliation{Departments of Applied Physics and Physics, Yale University, New Haven, Connecticut 06511, USA}
\affiliation{Yale Quantum Institute, Yale University, New Haven, Connecticut 06511, USA}

\begin{abstract}
Quantum error correction has recently emerged as a tool to enhance quantum sensing under Markovian noise. It works by correcting errors in a sensor while letting a signal imprint on the logical state. This approach typically requires a specialized error-correcting code, as most existing codes correct away both the dominant errors and the signal. To date, however, few such specialized codes are known, among which most require noiseless, controllable ancillas. We show here that such ancillas are not needed when the signal Hamiltonian and the error operators commute; a common limiting type of decoherence in quantum sensors. We give a semidefinite program for finding optimal ancilla-free sensing codes in general, as well as closed-form codes for two common sensing scenarios: qubits undergoing dephasing, and a lossy bosonic mode. Finally, we analyze the sensitivity enhancement offered by the qubit code under arbitrary spatial noise correlations, beyond the ideal limit of orthogonal signal and noise operators.
\end{abstract}

\maketitle

\paragraph*{
Introduction.}
Quantum systems can make very effective sensors; they can achieve exceptional sensitivity to a number of physical quantities, among other features. However, as with most quantum technologies, the performance of quantum sensors is limited by decoherence. Typically, a quantum sensor acquires a signal as a relative phase between two states in coherent superposition \cite{giovannetti2006quantum,giovannetti2011advances,degen2017quantum}. Its sensitivity therefore depends both on how quickly this phase accumulates, and on how long the superposition remains coherent. The fundamental strategy to enhance sensitivity is then to increase the rate of signal acquisition (e.g., by exploiting entanglement) without reducing the coherence time by an equal amount \cite{huelga:1997}. These competing demands pose a familiar dilemma in quantum engineering: a quantum sensor must couple strongly to its environment without being rapidly decohered by it.

Quantum error correction (QEC) has recently emerged as a promising tool to this end. It is effective with DC signals and Markovian decoherence; important settings beyond the reach of dynamical decoupling, a widely-used tool with the same goal \cite{degen2017quantum,viola:1998, ban:1998}. The typical QEC sensing scheme involves preparing a superposition of logical states, and periodically performing a recovery operation (i.e., error detection and correction). This allows a signal to accumulate as a relative phase at the logical level, while also extending the duration of coherent sensing. For such a scheme to enhance sensitivity, however, great care must be taken in designing a QEC code which corrects the noise but not the signal. This new constraint is unique to error-corrected quantum sensing, and has no clear analog in quantum computing or quantum communication. Indeed, most QEC codes developed for these latter applications do not satisfy the above constraint, and so cannot be used for sensing. Device- and application-adapted QEC codes for sensing are therefore of timely relevance.

Recent works have begun to reveal how---and under what conditions---new QEC codes could enhance quantum sensing. Initial schemes assumed a signal and a noise source which coupled to a sensor in orthogonal directions, e.g., through $\sigma_z$ and $\sigma_x$ respectively \cite{kessler2014quantum,arrad2014increasing,unden2016quantum,dur2014improved,ozeri2013heisenberg, reiter:2017}. 
It was shown that unitary evolution could be restored asymptotically (in the sense that recoveries are performed with sufficiently high frequency) via a two-qubit code utilizing one probing qubit and one noiseless ancillary qubit \cite{kessler2014quantum,arrad2014increasing,unden2016quantum}. These results were generalized in Refs.~\cite{sekatski2017quantum,demkowicz2017adaptive,zhou2018achieving}, which showed that given access to noiseless ancillas, one can find a code that completely corrects errors without also correcting away the signal, provided the sensor's Hamiltonian is outside the so-called ``Lindblad span". (Intuitively, the Hamiltonian-not-in-Lindblad-span, or HNLS, condition means that the signal is not generated solely by the same Lindblad error operators one seeks to correct.) Ref.~\cite{layden2018spatial} then adapted this result to qubits with signal and noise in the same direction (say, both along $\sigma_z$), and found numerical evidence that noiseless ancillas were unnecessary in this common experimental scenario.

Noiseless, controllable ancillas have often been assumed for mathematical convenience in constructing QEC codes for sensing. While such ancillas are seldom available in experiment, little is known to date as to whether they are truly necessary for error-corrected quantum sensing, beyond limited counterexamples \cite{dur2014improved, ozeri2013heisenberg, reiter:2017, layden2018spatial}. Similarly, Refs.~\cite{zhou2018achieving,layden2018spatial} showed, through perturbative arguments, that QEC can still enhance sensitivity even when the HNLS condition is not exactly met. However, the exact sensitivity attainable with an error-corrected quantum sensor outside this ideal HNLS limit is unknown. We address both of these open questions in this work. First, we give a sufficient condition for error-corrected quantum sensing without noiseless ancillas, and a corresponding method to construct optimal QEC codes for sensing without ancillas. We then present new explicit codes for two archetypal settings: qubits undergoing dephasing, and a lossy bosonic mode. Finally, we introduce a QEC recovery adapted for the former code, and give an exact (i.e., non-perturbative) expression for the achievable sensitivity outside the HNLS limit.

\paragraph*{
QEC for sensing.}

We consider a $d$-dimensional sensor ($d < \infty$) under Markovian noise, whose dynamics is given by a Lindblad master equation \cite{breuer2002theory,gorini1976completely,lindblad1976generators}
\begin{equation}
\frac{d\rho}{dt} = -i[\omega H,\rho] + \sum_{i} \left(L_i \rho L_i^\dagger - \frac{1}{2}\{L_i^\dagger L_i,\rho\}\right),
\label{eq:lindblad}
\end{equation}
where $\omega H$ is the Hamiltonian from which $\omega$ is to be estimated, and $\{L_i\}$ are the Lindblad operators describing the noise. The Lindblad span associated with \eqref{eq:lindblad} is $\mathfrak{S} = {\rm span}\{I,L_i,L_i^\dagger,L_i^\dagger L_j,\,\forall i,j\}$, where ${\rm span}\{\cdot\}$ denotes the real linear subspace of Hermitian operators spanned by $\{\cdot\}$. 
One can use noiseless ancillas to construct a QEC code, described by the projector $P$, which asymptotically restores the unitary dynamics with non-vanishing signal
\begin{equation}
\label{eq:UnitaryChannel}
\frac{d\rho}{dt} = -i[\omega H_\eff,\rho], 
\end{equation}
where $H_\eff = P H P \not \propto P$, if and only if the HNLS condition is satisfied ($H \notin \mathfrak{S}$) \cite{zhou2018achieving}. To go beyond this result, we want to find conditions for QEC sensing codes that do not require noiseless ancillas, but still promise to reach the same optimal sensitivity. 
In parameter estimation, the quantum Fisher information (QFI) is used to quantify the sensitivity. According to the quantum Cram\'er-Rao bound~\cite{helstrom1976quantum,helstrom1968minimum,paris2009quantum,braunstein1994statistical}, the standard deviation $\delta \omega$ of the $\omega$-estimator is bounded by
$
\delta \omega \geq (N_\expr F(t))^{-1/2},
$
where $N_\expr$ is the number of experiments and $F(t)$ is the QFI as a function of the final quantum state. The bound is asymptotically achievable as $N_\expr$ goes to infinity~\cite{braunstein1994statistical,kobayashi2011probability,casella2002statistical}. For a pure state $\ket{\psi}$ evolving under Hamiltonian $\omega H$, $F(t) = 4t^2(\braket{\psi|H^2|\psi} - \braket{\psi|H|\psi}^2)$. Note that in this case $\delta \omega \propto 1/t$ is the so-called Heisenberg limit in time---the optimal scaling with respect to the probing time $t$ in quantum sensing \cite{giovannetti2006quantum,giovannetti2011advances,degen2017quantum}. In particular, the optimal asymptotic QFI provided by the error-corrected sensing protocol in Ref.~\cite{zhou2018achieving}, maximized over all possible QEC codes, is given by 
\begin{equation}
\label{eq:OptimalQFI}
F_{\rm opt}(t) = 4t^2 \min_{S \in \mathfrak{S}} \norm{ H - S }^2\equiv 4t^2 \norm{ H - \mathfrak S }^2,
\end{equation}
where $\norm{\cdot}$ is the operator norm.

\paragraph*{Commuting noise.}
We address here the following open questions: (i) Under what conditions the noiseless sensing dynamics in \eqref{eq:UnitaryChannel} can be achieved with an ancilla-free QEC code. (ii) Whether such code can achieve the same optimal asymptotic QFI in \eqref{eq:OptimalQFI} afforded by noiseless ancillas. We give a partial answer to these questions in terms of a sufficient condition on the signal Hamiltonian and the Lindblad jump operators in the following theorem.

\begin{theorem}[Commuting noise]
\label{thm:commuting}
Suppose $H \notin \mathfrak{S}$ and $[H, L_i] = [L_i, L_j] = 0$, $\forall i,j$. Then there exists a QEC sensing code without noiseless ancilla that recovers the Heisenberg limit in $t$ asymptotically. Moreover, it achieves the same optimal asymptotic QFI [\eqref{eq:OptimalQFI}] offered by noiseless ancillas.
\end{theorem}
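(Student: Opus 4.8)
The plan is to prove \emph{achievability} directly: since every ancilla-free code is in particular one of the codes over which \eqref{eq:OptimalQFI} is optimized, its quantum Fisher information can never exceed $F_{\rm opt}(t)$, so it suffices to exhibit one ancilla-free code that attains $F_{\rm opt}(t)$ and restores the unitary dynamics of \eqref{eq:UnitaryChannel}; this single construction then settles both claims of the theorem at once. The starting observation is that the hypotheses force $[H,S]=0$ for every $S\in\mathfrak{S}$, because each generator $I,L_i,L_i^\dagger,L_i^\dagger L_j$ commutes with $H$. Hence $H$ and all of $\mathfrak{S}$ are block-diagonal with respect to the eigenspaces of $H$; in the physically central dephasing case where the $L_i$ are normal, the family $\{H,L_i\}$ is simultaneously diagonalizable (by Fuglede--Putnam the commuting normals have commuting adjoints), so I may represent each operator by its vector of eigenvalues in a common eigenbasis $\{\ket{k}\}$. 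Writing $H=\sum_k h_k\ket{k}\bra{k}$ and identifying $\mathfrak{S}$ with a real subspace of diagonal vectors, the figure of merit in \eqref{eq:OptimalQFI} reduces to the $\ell_\infty$ distance $\min_{S\in\mathfrak{S}}\norm{H-S}=\min_{s\in\mathfrak{S}}\max_k\abs{h_k-s_k}$ from $(h_k)$ to $\mathfrak{S}$.

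The code then comes from convex duality. First I would invoke the standard normed-space duality
\begin{equation}
\min_{S\in\mathfrak{S}}\norm{H-S}=\max\Big\{\textstyle\sum_k h_k y_k \;:\; y\perp\mathfrak{S},\ \textstyle\sum_k\abs{y_k}\le 1\Big\},
\label{eq:plan-duality}
\end{equation}
and let $y^\star$ be a maximizer. Because $I\in\mathfrak{S}$, the constraint $y^\star\perp\mathfrak{S}$ forces $\sum_k y^\star_k=0$, so splitting $y^\star=y^\star_+-y^\star_-$ into positive and negative parts gives weights on \emph{disjoint} supports with $\sum_k y^\star_{+,k}=\sum_k y^\star_{-,k}=\tfrac12$. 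I then define the two logical states
\begin{equation}
\ket{c_0}=\sum_k\sqrt{2\,y^\star_{+,k}}\,\ket{k},\qquad
\ket{c_1}=\sum_k\sqrt{2\,y^\star_{-,k}}\,\ket{k},
\label{eq:plan-code}
\end{equation}
which live entirely in the sensor (no ancilla), are orthonormal by disjointness of supports, and span the code projector $P=\ket{c_0}\bra{c_0}+\ket{c_1}\bra{c_1}$.

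Next I would verify the two defining properties. For diagonal $S$ the Knill--Laflamme-type conditions read $\bra{c_0}S\ket{c_0}=\bra{c_1}S\ket{c_1}$ and $\bra{c_0}S\ket{c_1}=0$; the off-diagonal equation holds automatically by disjointness of supports, while the diagonal one becomes $\sum_k s_k\big(\abs{\langle k|c_0\rangle}^2-\abs{\langle k|c_1\rangle}^2\big)=\sum_k s_k\,(2y^\star_k)=0$, which is exactly $y^\star\perp\mathfrak{S}$. Thus $PSP\propto P$ for all $S\in\mathfrak{S}$, and the recovery argument underlying \eqref{eq:UnitaryChannel} restores unitary evolution with $H_\eff=PHP$. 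Since $H_\eff$ is diagonal in the code basis, its gap is $\bra{c_0}H\ket{c_0}-\bra{c_1}H\ket{c_1}=\sum_k h_k(2y^\star_k)=2\min_{S\in\mathfrak{S}}\norm{H-S}$ by \eqref{eq:plan-duality}, which is nonzero precisely because $H\notin\mathfrak{S}$; a pure probe in the code then yields $F(t)=t^2\big(\bra{c_0}H\ket{c_0}-\bra{c_1}H\ket{c_1}\big)^2=4t^2\min_{S\in\mathfrak{S}}\norm{H-S}^2=F_{\rm opt}(t)$, giving Heisenberg scaling in $t$ and matching \eqref{eq:OptimalQFI}.

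The main obstacle is the passage beyond the simultaneously-diagonalizable setting: when the commuting $L_i$ are not normal, $\mathfrak{S}$ need not be commutative, the eigenvalue-vector picture fails, and the off-diagonal conditions $\bra{c_0}S\ket{c_1}=0$ are no longer automatic. I expect to handle this by keeping $\ket{c_0}$ and $\ket{c_1}$ on \emph{disjoint $H$-eigenvalue supports}, which forces $\bra{c_0}S\ket{c_1}=0$ for every block-diagonal $S$ irrespective of commutativity, and then arguing---via the same duality applied to $G=H-S^\star$, where $S^\star\in\mathfrak{S}$ is the minimizer in \eqref{eq:OptimalQFI} and whose spectrum is balanced with extremal eigenvalues $\pm\min_{S\in\mathfrak{S}}\norm{H-S}$---that one may place $\ket{c_0}$ in the top eigenspace of $G$ and $\ket{c_1}$ in the bottom eigenspace while equalizing all $\mathfrak{S}$-expectations. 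Proving that such a pair always exists (equivalently, that two numerical-range/convex sets intersect) is the crux, and is exactly the feasibility question answered in general by the semidefinite program announced earlier, to which I would defer when no closed-form choice is available.
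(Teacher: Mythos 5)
Your proposal is correct and follows essentially the same route as the paper's own proof: pass to a common eigenbasis of $H$ and the $L_i$, reduce the code search to an $\ell_1$-constrained linear program over real vectors orthogonal to the diagonalized Lindblad span, build the two logical states from the positive and negative parts of the optimizer (disjoint supports giving orthogonality and the off-diagonal Knill--Laflamme conditions for free), and invoke $\ell_1$/$\ell_\infty$ duality to show the resulting QFI equals $F_{\rm opt}(t)$ — this is exactly the paper's SDP in Eqs.~(\ref{eq:SDP-1})--(\ref{eq:SDP-2}) up to a factor-of-two normalization. The only notable difference is that you explicitly flag the normality of the $L_i$ needed for simultaneous orthonormal diagonalization, a hypothesis the paper's proof uses silently, so your argument has the same scope of validity as theirs while being more candid about it.
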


\begin{proof}
A QEC sensing code  recovering \eqref{eq:UnitaryChannel} should satisfy the following three conditions \cite{zhou2018achieving}: 
\begin{gather}
P H P \not\propto P, \label{eq:QEC-1}\\
P L_i P \propto P, \quad
P L_i^\dagger L_j P \propto P,\label{eq:QEC-2}
\end{gather}
where $P = \ket{0_\textsc{l}} \!\! \bra{0_\textsc{l}} + \ket{1_\textsc{l}} \!\! \bra{1_\textsc{l}}$ is the projector onto the code space. \eqref{eq:QEC-2} is exactly the Knill-Laflamme condition to the lowest order in time evolution~\cite{bennett1996mixed,knill1997theory,nielsen2010quantum,beny2011perturbative} and \eqref{eq:QEC-1} is an additional requirement that the signal should not vanish in the code space. We say the code corrects the Lindblad span $\mathfrak S$ if \eqref{eq:QEC-2} satisfied. Without loss of generality, we consider only a 2-dimensional code $\ket{0_\textsc{l}} = \sum_{i=1}^d \sqrt{\beta^0_i} \ket{i}$, and $\ket{1_\textsc{l}} = \sum_{i=1}^d \sqrt{\beta^1_i} \ket{i}$, where $\{\ket{i}\}_{i=1}^d$ is an orthonormal basis under which $H$ and $L_i$'s are all diagonal. Define $d$-dimensional vectors $\gv 1, \gv h, \gv \ell_i$, and $\gv \ell_{ij}$ such that $(\v 1)_k = 1$, $(\gv h)_k = \braket{k|H|k}$, $(\gv{\ell_i})_k = \braket{k|L_i|k}$ and $(\gv{\ell_{ij}})_k = \braket{k|L^\dagger_i L_j|k}$. Define the real subspace $\mathfrak S_{\rm diag} = {\rm span}\{\v 1,{\rm Re}[\gv{\ell_i}],{\rm Im}[\gv{\ell_i}],{\rm Re}[\gv{\ell_{ij}}],{\rm Im}[\gv{\ell_{ij}}],\forall i,j\} \subseteq \mathbb R^{d}$. 
One can verify that the optimal code can be identified from the optimal solution $\gv\beta^* = \gv\beta^{0*} - \gv\beta^{1*}$ of the following semidefinite program (SDP) \cite{boyd2004convex}, where $\gv\beta^{0(1)}$ is the positive (negative) part of $\gv\beta$:
\begin{align}
\text{maximize}\,~~~& \braket{\gv\beta,\gv h}\label{eq:SDP-1}\\
\text{subject~to}~~~& \norm{\gv\beta}_1 \leq 2, \text{ and }\braket{\gv\beta,\gv \ell} = 0,\;\forall \gv \ell \in \mathfrak S_{\rm diag}.\label{eq:SDP-2}
\end{align}
Here $\norm{\gv x}_1 = \sum_{i=1}^d \abs{x_i}$ is the one-norm in $\mathbb{R}^d$ and $\braket{\gv x,\gv y} = \sum_{i=1}^d x_iy_i$ is the inner product.
Choosing the optimal input quantum state $\ket{\psi_{0}} = \frac{1}{\sqrt 2}(\ket{0_\textsc{l}}+\ket{1_\textsc{l}})$, the QFI at time $t$ is
$F(t) = t^2 \abs{\braket{\gv{\beta^0}-\gv{\beta^1},\gv h}}^2$.
Moreover, the optimal value of \eqref{eq:SDP-1} is $2 \min_{\gv \ell \in \mathfrak S_{\rm diag}} \norm{\gv h + \gv \ell}_\infty$ with the argument of the minimum denoted by $\gv\ell^\diamond$. Here $\norm{\cdot}_\infty$ denotes the infinity/max norm, defined by the largest absolute value of elements in a vector. The optimal solution $\gv\beta^{0(1)*}$ can be solved from the constraint that it is in the span of vectors $\gv v$ such that $\braket{\gv v,\gv h + \gv \ell^\diamond}$ is the largest (smallest) \cite{boyd2004convex}. In this case, $F(t) = 4t^2 \norm{\gv h - \mathfrak S_{\rm diag}}^2_\infty $ is the same as the optimal asymptotic QFI in \eqref{eq:OptimalQFI} that one can achieve with access to noiseless ancilla. Therefore, we conclude that $\gv\beta^{0(1)*}$ gives the optimal code. 
\end{proof}

\thmref{thm:commuting} reveals that the need for noiseless ancillas for QEC sensing arises from the non-commuting nature of the Hamiltonian and Lindblad operators. To this end, we give a non-trivial example with $[H, L_i] \neq 0$ for which there exist no ancilla-free QEC codes---even for arbitrarily large $d$---in \appref{app:necessary} of the Supplemental Material~\cite{SM}. Another interesting feature of commuting noise is that it allows quantum error correction to be performed with a lower frequency, by analyzing the evolution in the interaction picture~\cite{SM} (\appref{app:rotating}).  

We now consider two explicit, archetypal examples of quantum sensors dominated by commuting noise. In principle, a QEC code for each example could be found numerically through \thmref{thm:commuting}. Instead, however, we introduce two near-optimal, closed-form codes which are customized to the application and the errors at hand in both examples.

\paragraph*{Correlated Dephasing Noise.}

A common sensing scenario involves a quantum sensor composed of $N\ge 1$ probing qubits with energy gaps proportional to $\omega$ \cite{degen2017quantum}. For such a sensor to be effective, the qubits' energy gaps must depend strongly on $\omega$, which in turn makes them vulnerable to rapid dephasing due to fluctuations in their energies from a noisy environment. Consequently, dephasing is typically the limiting decoherence mode in such sensors \cite{biercuk:2009, witzel:2010, bluhm:2011, doherty:2013, muhonen:2014, orgiazzi:2016}. For simplicity, we suppose that each qubit has the same dephasing time $T_2$.
The generic Markovian dynamics for the sensor is then
\begin{equation}
\frac{d \rho}{dt}
=
-i [\omega H, \rho] + \frac{1}{2 T_2}
\sum_{j,k=1}^N
c_{jk} \Big( Z_j \rho Z_k - \frac{1}{2} \{ Z_j Z_k, \rho \} \Big).
\label{eq:dephasing_lindblad}
\end{equation}
Here, $H = \frac{1}{2} \gv \frakh \cdot \gv Z$ where $\gv Z = (Z_1, \dots, Z_N)$, so qubit $j$ has an energy gap $\omega \mathfrak h_j$. (Note that $\frakh \in \mathbb{R}^N$, whereas $\gv h$ in \eqref{eq:SDP-1} has dimension $d=2^N$.) The correlation matrix $C = (c_{jk})_{j,k=1}^N$
describes the spatial structure of the noise. It can be quite general, depending, say, on the proximity and orientation of the qubits to a nearby fluctuator, or on their coupling to a common resonator. In particular, $c_{jk} \in [-1,1]$ describes the correlation between the fluctuations on qubits $j$ and $k$, with the extremes $c_{jk}=1, -1 $ and 0 signifying full positive, full negative, and the absence of correlations, respectively. The assumption of identical $T_2$'s on all qubits is easily removed through a different, albeit less intuitive, definition of $C$ \cite{layden2018spatial}.

\eqref{eq:dephasing_lindblad} can be converted to the form of \eqref{eq:lindblad} by diagonalizing $C$. Concretely, $L_j = \sqrt{\lambda_j} \gv v_j \cdot \gv Z$ can be viewed as normal modes of the phase noise, where  $C \gv v_j = \lambda_j \gv v_j$ for some orthonormal eigenbasis $\{ \gv v_j \}_j \subset \mathbb{R}^N$. The HNLS condition then translates to $\frakh \notin {\rm col}(C)$, the column space of $C$, in this setting, which occurs when some normal mode overlapping with $H$ (i.e., $\gv v_u \cdot \frakh \neq 0$ for some $u \in [1,N]$) has a vanishing amplitude (i.e., $\lambda_u = 0$). This occurs generically in the limit of strong spatial noise correlations, provided the noise is not uniformly global \cite{layden2018spatial}. Observe that $[H, L_j]=[L_j, L_k]=0$ here, so \thmref{thm:commuting} guarantees a QEC code without noiseless ancillas saturating the optimal bound in \eqref{eq:OptimalQFI} when HNLS is satisfied. One such code, for $N \ge 3$, is given by
\begin{equation}
\label{eq:phase_code}
\begin{split}
\ket{0_\textsc{l}} &= \bigotimes_{j=1}^N
\Big[
\cos(\theta_j) \ket{0_j}
+
i \sin(\theta_j) \ket{1_j}
\Big]\\
\ket{1_\textsc{l}} &= X^{\otimes N} \ket{0_\textsc{l}},
\end{split}
\end{equation}
where $\gv\theta = \frac{1}{2}\arccos \frakb^\diamond$, defined element-wise, and $\frakb^\diamond$ is the solution of the following SDP: 
\begin{align}
\text{maximize}~~~& \braket{\frakb, \frakh}
\label{eq:qubit-SDP-1}\\
\text{subject~to}~~& \| \frakb \|_\infty \leq 1, \text{ and }\frakb \perp \mathrm{col}(C).
\label{eq:qubit-SDP-2}
\end{align}
It is straightforward to show that the code in  \eqref{eq:phase_code}, with this choice of $\frakb^\diamond$, satisfies the QEC sensing conditions Eqs.~(\ref{eq:QEC-1})--(\ref{eq:QEC-2}). It works by correcting all non-vanishing noise modes, but leaving a vanishing mode with the maximum overlap with $H$ uncorrected, through which $H$ affects the logical state. Moreover, it achieves the optimal asymptotic QFI in \eqref{eq:OptimalQFI}; in this case~\cite{SM} (\appref{app:dephasing_KL}): 
\begin{equation}
F_{\rm opt}(t) = t^2 \norm{\frakh - {\rm col}(C) }_1^2.
\label{eq:dephasing_QFI}
\end{equation}

Some remarks are in order. (i) Note that since the signal and noise here are both along $\sigma_z$ on each qubit, the usual repetition code \cite{repetition} is not suitable for sensing as it corrects not only the noise operators $L_j$, but also the signal Hamiltonian $H$. (ii) Remarkably, while the domain of the SDP in Eqs.~(\ref{eq:SDP-1})--(\ref{eq:SDP-2}) has dimension $O(2^N)$, that of Eqs.~(\ref{eq:qubit-SDP-1})--(\ref{eq:qubit-SDP-2}) only has dimension $O(N)$ due to our choice of ansatz. The ansatz in \eqref{eq:phase_code} therefore renders the QEC code optimization efficient. (iii) An approximate solution to Eqs.~(\ref{eq:qubit-SDP-1})--(\ref{eq:qubit-SDP-2}) is
\begin{equation}
\tilde{\frakb}^\diamond
=
\gamma \; \text{proj}_{\text{ker}(C)}  \frakh,
\label{eq:hC_tilde}
\end{equation}
where $\gamma$ is an adjustable parameter in the range $|\gamma| \in (0, \gamma_\text{max}]$, and $\gamma_\text{max} = \|\text{proj}_{\text{ker}(C)}  \gv \frakh \|_\infty^{-1}$. The code using $\gv\theta = \frac{1}{2}\arccos \tilde{\frakb}^\diamond$ always satisfies the QEC sensing conditions exactly [Eqs.~(\ref{eq:QEC-1})--(\ref{eq:QEC-2})], although it is approximate in that it need not saturate the optimal QFI in \eqref{eq:dephasing_QFI}. In the important case of a single vanishing noise mode [i.e., nullity$(C)=1$], however, \eqref{eq:hC_tilde} with $\gamma=\gamma_\text{max}$ achieves the optimal QFI.

\paragraph*{Lossy bosonic channel.}

Boson loss is often the dominant decoherence mechanism in a bosonic mode~\cite{chuang1997bosonic}, where the master equation is 
\begin{equation}
\frac{d\rho}{dt} = -i\Big[\sum_{i=1}^s \zeta_i (a^\dagger a)^i,\rho\Big] + \kappa \Big(a\rho a^\dagger - \frac{1}{2}\{a^\dagger a,\rho\}\Big),
\end{equation}
where $a$ is the annihilation operator, $\kappa$ is the boson loss rate and we only focus on the case where the Hamiltonian is a function of the boson number $a^\dagger a$. We apply a cutoff at the $s$-th power of $a^\dagger a$, where $s>1$ is a positive integer and ignore higher orders. We also truncate the boson number at a large number $M$, to make sure the system dimension is finite. According to the HNLS condition, $\zeta_1$ could not be sensed at the Heisenberg limit. However, it is possible to sense $\omega:=\zeta_s$ at the Heisenberg limit asymptotically, where the optimal code for the $s=2$ case is provided in Ref.~\cite{zhou2018achieving}. 

In order to sense $\omega$, it is important to filter out all lower-order signals $\sum_{i=1}^{s-1}\zeta_i(a^\dagger a)^{i}$ using the QEC code. Therefore, we should use the following modified Lindblad span~\cite{SM} (\appref{app:bosonic-noise}):
\begin{equation}
\label{eq:bosonic-lindblad}
\mathfrak S = \{I,a,a^{\dagger},(a^\dagger a)^i,\,1\leq i\leq s-1\}. 
\end{equation}
Note that the boson loss noise is not commuting because $[a,(a^\dagger a)^i] \neq 0$. However, this type of off-diagonal noise (if we use boson number eigenstates as basis) could be tackled simply by ensuring the distance of the supports (non-vanishing terms) of $\ket{0_{\textsc l}}$ and $\ket{1_{\textsc l}}$ is at least 3.

To obtain the optimal code, we could solve the SDP in Eqs.~(\ref{eq:SDP-1})--(\ref{eq:SDP-2}). However, when $M$ is sufficiently large, we could obtain a near-optimal solution analytically by observing that for large $M$, minimizing $\|(a^\dagger a)^s - \sum_{i=0}^{s-1} \chi_i (a^\dagger a)^i\|$ over all possible $\{\chi_i\}_{i=0}^{s-1}$ is equivalent to approximating a $s$-th degree polynomial using an $(s-1)$-degree polynomial.

The optimal polynomial is the Chebyshev polynomial \cite{chebyshev} and the near-optimal code is supported by its max/min points:
\begin{equation}
\label{eq:Chebyshev}
\begin{split}
\ket{0_\textsc{l}} &=\sum_{k \text{~even}}^{[0,s]} \tilde{c}_k \Ket{\left\lfloor M\sin^2\left( k\pi/{2s}\right) \right\rfloor},\\
\ket{1_\textsc{l}} &= \sum_{k \text{~odd}}^{[0,s]} \tilde{c}_k \Ket{\left\lfloor M\sin^2 \left(k\pi/{2s}\right) \right\rfloor},
\end{split}
\end{equation}
where $\lfloor x \rfloor$ means the largest integer smaller than or equal to $x$, $\abs{\tilde{c}_k}^2$ are positive numbers which can be obtained from solving a linear system of equations of size $O(s^2)$.  
$\abs{\tilde{c}_k}^2$ is approximately equal to $\frac{2}{s} - \frac{1}{s} \delta_{ks} - \frac{1}{s} \delta_{k0}$ for sufficiently large $M$. It is interesting to note that the supports of $\ket{0_{\textsc l}}$ and $\ket{1_{\textsc l}}$ bears a resemblance to the optimal time intervals in Uhrig dynamical decoupling~\cite{uhrig2007keeping}. Detailed calculations are provided in Ref.~\cite{SM} (\appref{app:bosonic-code}). 

We call \eqref{eq:Chebyshev} the $s$\textit{-th order Chebyshev code}. From the point of view of  quantum memories, the $s$-th order Chebyshev code could correct $s-1$ dephasing events, $L$ boson losses and $G$ gains, when $L+G=\left\lfloor\frac{M}{2}\sin^2\left(\frac{\pi}{s}\right)-1\right\rfloor$~\cite{michael2016new}. In terms of error-corrected quantum sensing, it corrects the Lindblad span (\eqref{eq:bosonic-lindblad}) and provides a near optimal asymptotic QFI for $\omega$
\begin{equation}
F(t) \approx F_{\rm opt}(t) \approx 16t^2\left(\frac{M}{4}\right)^{2s},
\end{equation}
for sufficiently large $M$. Note that the $(s-1,\frac{M}{s}-1)$ binomial code~\cite{michael2016new} also corrects~\eqref{eq:bosonic-lindblad}, but it gives a QFI that is exponentially smaller than the optimal value by a factor of $O\big(s(2/e)^{2s}\big)$ for a sufficiently large $M$.

\paragraph*{Enhancing sensitivity beyond HNLS.}

Previous works have focused on regimes where the HNLS condition is exactly satisfied. This is the ideal scenario where QEC can, in principle, suppress decoherence arbitrarily well while maintaining the signal, thus achieving the Heisenberg limit in time. However, QEC can still enhance quantum sensing well beyond this ideal case. The main difference is that when the HNLS condition is not satisfied, the encoded dynamics of the sensor will not be completely unitary (even asymptotically for $\Delta t \rightarrow 0$), in contrast with \eqref{eq:UnitaryChannel}. Yet, decoherence at the logical level can often be made weaker than that at the physical level---while still maintaining signal---giving a net enhancement in sensitivity. 

To show how, we generalize the example of $N \ge 3$ qubits under phase noise to this more realistic setting. When HNLS is satisfied, the code in \eqref{eq:phase_code} corrects noise modes with non-zero amplitude $\lambda_j > 0$, but leaves a mode with $\lambda_u=0$ uncorrected. The signal Hamiltonian is then detected through its projection onto the uncorrected noise mode $\sim \gv v_u \cdot \gv Z$. In experiment, however, the noise correlation matrix $C$ is generically full-rank, meaning that the HNLS condition is not satisfied, i.e., $C$ has no vanishing eigenvalues. Yet, non-trivial noise correlations (i.e., $C \not \propto I$) will generally cause $C$ to have a non-uniform spectrum, meaning that some eigenvalues, and therefore some $L_j$'s, will be subdominant. It is therefore possible to adapt the HNLS approach to this scenario by designing a code that accumulates signal at the cost of leaving uncorrected just one subdominant noise mode ($\lambda_u \approx 0$). This is done through an appropriate choice of $\gv \theta$ in \eqref{eq:phase_code}. To reach a closed-form expression for the resulting sensitivity, we use \eqref{eq:hC_tilde} as a starting point rather than an SDP formulation. Concretely, since $\text{ker}(C)=\{ \gv 0 \}$ for generic $C$, we take
\begin{equation}
    \gv \theta = \frac{1}{2} \arccos(\gamma \gv v_u),
\label{eq:theta_nonHNLS}
\end{equation}
defined element-wise, where $\gamma \in (0, \gamma_\text{max}]$ is again adjustable, now with $\gamma_\text{max} = \|\gv v_u\|_\infty^{-1}$. Recall that $u \in [1,N]$ is the (to-be-determined) index of the mode left uncorrected.

The natural figure of merit for a sensor with uncorrected noise is not the Fisher information: decoherence eventually causes $F(t)$ to peak and then decrease, rather than grow unbounded as in \eqref{eq:OptimalQFI}. Instead, it is sensitivity, defined as the smallest resolvable signal (i.e., giving unit signal-to-noise) per unit time \cite{degen2017quantum}.  For a single qubit with an energy gap $A \omega$ and dephasing time $T_2/B$, the best (smallest) achievable sensitivity is \cite{layden2018spatial}
\begin{equation}
    \eta = \min_{t > 0} \frac{1}{\sqrt{F(t)/t}}
    =
    \frac{\sqrt{B}}{A} \sqrt{\frac{2 e}{T_2}}.
    \label{eq:eta}
\end{equation}
Taking $\mathfrak h_j=1$ in \eqref{eq:dephasing_lindblad}, each physical qubit ($A=B=1$) gives $\eta_1 = \sqrt{2 e/T_2}$. $N$ such qubits operated in parallel give $\eta_\text{par} = \eta_1/\sqrt{N}$, while for entangled states one could reach $A=N$, often at the cost of an increased $B$. For example, one can easily show that a Greenberger-Horne-Zeilinger (GHZ) sensing scheme with the same $N$ qubits gives 
\begin{equation}
    \eta_\text{GHZ} = 
    \frac{\|D_C^{1/2} V^\top \frakh \|_2}{N} \sqrt{\frac{2 e}{T_2}},
    \label{eq:eta_GHZ}
\end{equation}
where $V = (\gv v_1, \dots, \gv v_N)$ and $D_C = \text{diag}(\lambda_1, \dots, \lambda_N)$ so that $C = V D_C V^\top$ \cite{GHZ}. 
Note that for uncorrelated noise we have $\|D_C^{1/2} V^\top \frakh\|_2=\sqrt{N}$, thus negating any gains from entanglement.

To find the sensitivity offered by the QEC code described above, we compute the sensor's effective Liouvillian under frequent recoveries: $\mathcal{L}_\text{eff} = \mathcal{R} \circ \mathcal{L} \circ \mathcal{P}$, where $\mathcal{R}$ is the QEC recovery, $\mathcal{L}$ is the sensor's Liouvillian [so that \eqref{eq:lindblad} reads $\dot{\rho} = \mathcal{L}(\rho)$], and $\mathcal{P}(\rho) = P \rho P$ \cite{layden2018spatial}. The usual QEC recovery procedure (i.e., the transpose channel) results in population leakage out of the codespace due to the uncorrected error $L_u$ when applied to the above code, even when $\Delta t \rightarrow 0$, which complicates the analysis \cite{nielsen2010quantum, lidar:2013}. To prevent such leakage at leading order in $\Delta t/T_2$, we modify the usual recovery so that the state is returned to the codespace after an error $L_u$, though perhaps with a logical error. This modification results in a Markovian, trace-preserving effective dynamics over the two-dimensional codespace, given by $\mathcal{L}_\text{eff}$. Specifically, the effective dynamics of the sensor becomes that of a dephasing qubit with $A = \gamma \gv |\gv v_u \cdot \frakh |$ and $B = \gamma^2 \lambda_u$, giving the closed-form expression $\eta_\text{QEC}^{(u)} = \eta_1 \sqrt{\lambda_u} /|\gv v_u \cdot \frakh | $. The optimal choice of $u$ is the one that minimizes this quantity, giving:
\begin{equation}
    \eta_\text{QEC}
    =\frac1{
    \|D_C^{-1/2} V^\top \frakh \|_\infty} 
    \sqrt{
    \frac{2e}{T_2}
    },
    \label{eq:eta_QEC}
\end{equation}
valid for arbitrary noise correlation profile $C$ \cite{regularize}. The calculation is straightforward but lengthy, and is given in \cite{SM} (\appref{app:dephasing_Leff}). Notice that the free parameter $\gamma$ cancels out in the final expression for $\eta_\text{QEC}$.

\eqref{eq:eta_QEC} allows one to determine the $C$'s for which this QEC scheme provides enhanced sensitivity over parallel and GHZ sensing. Notice that while HNLS is satisfied only in a measure-zero set of $C$'s (on the boundary of the set of possible correlation matrices), QEC can enhance sensitivity over a much larger set, regardless of whether it can approach the Heisenberg limit in $t$.

Notice that \eqref{eq:eta_QEC} admits a broad range of $\eta_\text{QEC}$ vs.\ $N$ scalings. This breadth of possibilities is due to the critical dependence of $\eta_\text{QEC}$ on $C=C(N)$, which could grow with $N$ in myriad different ways. The same is true of the Fisher information in the HNLS limit. Consider, for instance, a sensor comprising $k$ clusters of $n=N/k$ qubits, where each cluster satisfies the HNLS condition, but where the noise has no inter-cluster correlations. In this case, one could use the code of  \eqref{eq:phase_code} to make a noiseless sensor from each cluster, and perform GHZ sensing at the logical level to get $\eta \propto 1/k \propto 1/N$ Heisenberg scaling. On the other hand, given an $N$-qubit sensor already satisfying HNLS, adding an additional qubit which shares no noise correlations with the others has no impact on $\eta_\text{QEC}$. An intermediate example between these extreme scalings is analyzed in Ref.~\cite{SM} (\appref{app:dephasing_ex}).

\paragraph*{Discussion.} We have shown that noiseless ancillas, while frequently invoked, are not required for a large family of error-corrected quantum sensing scenarios where the Hamiltonian and the noise operators all commute. Our proof is constructive, and gives a numerical method for designing QEC codes for sensing through semidefinite programming, analogous to the techniques from Refs.\ \cite{fletcher:2007, kosut:2009} for quantum computing. Commuting noise, however, is not necessary for ancilla-free codes (see, e.g., Refs.\ \cite{dur2014improved, ozeri2013heisenberg, reiter:2017}); refining \thmref{thm:commuting} into a necessary and sufficient condition is a problem for future works.

We have also introduced near-optimal, closed-form QEC codes and associated recoveries for two common sensing scenarios. For dephasing qubits, we found an expression for the sensitivity enhancement offered by our QEC scheme under arbitrary Markovian noise, even when the Heisenberg limit in $t$ could not be reached. Our results raise the questions of whether there exists a simple geometric condition defining the set of $C$'s for which QEC can enhance sensitivity, and whether or not \eqref{eq:eta_QEC} is a fundamental bound for QEC schemes. Finally, ancilla-free QEC code design through convex optimization---both beyond the HNLS limit and for non-commuting noise more generally---presents a promising prospect for future work.

 \paragraph*{Acknowledgments.}
We thank Victor Albert, Kyungjoo Noh, Florentin Reiter and John Preskill for inspiring discussions. We acknowledge support from the ARL-CDQI (W911NF-15-2-0067, W911NF-18-2-0237), ARO (W911NF-18-1-0020, W911NF-18-1-0212), ARO MURI (W911NF-16-1-0349, W911NF-15-1-0548), AFOSR MURI (FA9550-14-1-0052, FA9550-15-1-0015), DOE (DE-SC0019406), NSF (EFMA-1640959, EFRI-ACQUIRE 1641064, EECS1702716) and the Packard Foundation (2013-39273).

\paragraph*{Author contributions.} S.Z.\ conceived of \thmref{thm:commuting}, the SDP to optimize the dephasing code, and the bosonic code. D.L.\ devised the dephasing code ansatz, its closed-form approximate solution, and its extension beyond HNLS. P.C.\ and L.J.\ supervised the project. All authors discussed the results and contributed to the final manuscript.

\bibliographystyle{apsrev}
\bibliography{ancilla-refs}

\onecolumngrid
\vspace{0.3in}
\newpage
\appendix

\section{\label{app:necessary}An example where noiseless ancilla is necessary}

It is known that when the system dimension  ($d$) is sufficiently large, compared to the dimension of the noise space ($\dim \mathfrak S$), a valid QEC code satisfying \eqref{eq:QEC-2} always exists (Theorem 4 in \cite{knill2000theory}). 

However, it does not guarantee the existence of a valid QEC code for sensing, where \eqref{eq:QEC-1} and \eqref{eq:QEC-2} should be satisfied simultaneously. 
Here we provide an example where a valid QEC sensing code cannot be constructed without noiseless ancilla. Consider Gell-Mann matrices:
\begin{gather}
\lambda_1 = \begin{pmatrix} 0 & 1 & 0 \\ 1 & 0 & 0 \\ 0 & 0 & 0\end{pmatrix},\quad
\lambda_2 = \begin{pmatrix} 0 & -i & 0 \\ i & 0 & 0 \\ 0 & 0 & 0\end{pmatrix},\quad
\lambda_3 = \begin{pmatrix} 1 & 0 & 0 \\ 0 & -1 & 0 \\ 0 & 0 & 0\end{pmatrix},\\
\lambda_4 = \begin{pmatrix} 0 & 0 & 1 \\ 0 & 0 & 0 \\ 1 & 0 & 0\end{pmatrix},\quad
\lambda_5 = \begin{pmatrix} 0 & 0 & -i \\ 0 & 0 & 0 \\ i & 0 & 0\end{pmatrix},\quad\\
\lambda_6 = \begin{pmatrix} 0 & 0 & 0 \\ 0 & 0 & 1 \\ 0 & 1 & 0\end{pmatrix},\quad
\lambda_7 = \begin{pmatrix} 0 & 0 & 0 \\ 0 & 0 & -i \\ 0 & i & 0\end{pmatrix},\quad
\lambda_8 = \frac{1}{\sqrt{3}}\begin{pmatrix} 1 & 0 & 0 \\ 0 & 1 & 0 \\ 0 & 0 & -2\end{pmatrix}.
\end{gather}
and $\lambda_0$ is the identity matrix. 
The Hilbert space $\mathcal H = \mathcal H_3 \oplus \mathcal H_{d-3}$ is the direct sum of a $3$-dimensional and a $(d-3)$-dimensional Hilbert space. Let $H = \lambda_5 \oplus 0_{d-3}$ and $L_i = \lambda_i \oplus 0_{d-3}$ with $i=1,2,4$ where $0_i$ means a $i$-dimensional zero matrix. One can check that 
\begin{equation}
\mathfrak S = {\rm span}\{I,\lambda_i \oplus 0_{d-3},\;i=0,1,2,3,4,6,7,8\}
\end{equation}
then the HNLS condition $H \notin \mathfrak S$ is satisfied. Suppose we have a two-dimensional QEC sensing code 
\begin{equation}
\ket{0_\textsc{l}} = \alpha^0_3 \ket{0_3} + \alpha^0_{d-3} \ket{0_{d-3}},\quad \ket{1_\textsc{l}} = \alpha^1_3 \ket{1_3} + \alpha^1_{d-3} \ket{1_{d-3}}.
\end{equation}
where $\ket{0(1)_3} \in \mathcal H_3$ and $\ket{0(1)_{d-3}} \in \mathcal H_{d-3}$. First of all, we note that $\alpha_3^{0}$ and $\alpha_3^{1}$ are not both zero because $PHP \not\propto P$. If $\alpha_3^0 = 0$, due to the error correction condition $P L_i P \propto P$ and $P L_i^\dagger L_j P \propto P$, we must have 
\begin{equation}
\braket{1_3|\lambda_i|1_3} = 0,\quad i=1,2,3,4,6,7,8,
\end{equation}
leading to $\ket{1_3} = 0$. Therefore, we conclude that $\alpha_3^0$ and $\alpha_3^1$ are both non-zero. In this case we must have
\begin{equation}
\braket{0_3|\lambda_i|1_3} = 0,\quad i=1,2,3,4,6,7,8,
\end{equation}
which again could not be satisfied for non-zero $\ket{0(1)_3}$. Therefore we conclude that a valid QEC sensing code does not exist without noiseless ancilla. The interesting part about this example is that the dimension $d$ of the Hilbert space $\mathcal H$ could be arbitrary large compared to the number of noise operators $\dim \mathfrak S = 9$, yet there is no valid QEC code correcting noise and preserving signal simultaneously. It means that the role of noiseless ancilla in quantum sensing could not be replaced by a simple extension of the system dimension, as in traditional quantum error correction.

\section{\label{app:rotating} Commuting noise allows QEC to be performed with a lower frequency}

The commuting noise allows the QEC sensing protocol to be performed with a lower frequency, and therefore rendering it more accessible in experiments. Roughly speaking, in order to enhance the parameter estimation using QEC, it needs to be performed fast enough such that the time interval $\Delta t$ between every QEC is much smaller than $\omega^{-1}$ and $\kappa^{-1}$ ($\kappa$ is the noise strength) \cite{kessler2014quantum,zhou2018achieving}. A common technique to decrease the value of $\omega$ is to apply a opposite Hamiltonian $-\tilde{\omega} H$, where $\tilde{\omega} \approx \omega$ is close to $\omega$ \cite{sekatski2017quantum,demkowicz2017adaptive,yuan2016sequential,pang2017optimal}. For commuting noise, however, we could simply perform the QEC in an interaction picture where the signal Hamiltonian is sufficiently small such that we only need $\Delta t \ll \kappa^{-1}$. 
Consider the following evolution of $\rho$, 
\begin{equation}
\frac{d\rho}{dt} = -i[H_0+H_1,\rho] + \sum_{i} \kappa_i \left(L_i \rho L_i^\dagger - \frac{1}{2}\{L_i^\dagger L_i,\rho\}\right),
\end{equation}
where $H_0$ and $H_1$ are independent of time, $\norm{H_0} \gg \norm{H_1}$ and $H_0$ does not depend on the parameter $\omega$ we want to estimate. 
Usually, we take $H_0 = \tilde \omega H$ and $H_1 = \omega H$ where $\tilde \omega$ is close to $\omega$. 
To achieve an enhancement in parameter estimation by QEC, we require the time interval $\Delta t$ between each QEC recovery to satisfy 
\begin{equation}
\label{eq:cond}
\Delta t \cdot \Big\|\sum_{i} \kappa_i L_i^\dagger L_i\Big\| \ll 1\quad\text{~~and~~}\quad
\Delta t \cdot \norm{H_0+H_1} \ll 1,
\end{equation}
where the latter condition could be more difficult to satisfy. Here we show that for commuting noise, we can perform QEC in the interaction picture, relaxing the requirement on $\Delta t$ (\eqref{eq:cond}) to 
\begin{equation}
\label{eq:cond-rot}
\Delta t \cdot \Big\|\sum_{i} \kappa_i L_i^\dagger L_i\Big\| \ll 1\quad\text{~~and~~}\quad
\Delta t \cdot \norm{H_1} \ll 1,
\end{equation}
which is more accessible in experiments. 

Assuming $\tilde \rho(t) = e^{iH_0 t} \rho e^{-iH_0 t}$, then the evolution in the interaction picture is
\begin{equation}
\frac{d\tilde{\rho}}{dt} = -i[H_1(t),\tilde{\rho}] + \sum_{i} \kappa_i \left(L_i(t) \tilde\rho L_i^\dagger(t) - \frac{1}{2}\{L_i^\dagger(t) L_i(t),\tilde\rho\}\right),
\end{equation}
where $H_1(t) = e^{i H_0 t}H_1 e^{-i H_0 t}$ and $L_i(t) = e^{i H_0 t}H e^{-i H_0 t}$. 

Consider the evolution of $\tilde{\rho}$ in time interval $\Delta t$ satisfying \eqref{eq:cond-rot} such that $\tilde{\rho}(t+\Delta t) \approx \tilde{\rho}(t)$, but $\norm{H_0} \Delta t$ is not negligible so that $\rho(t+\Delta t) \not\approx \rho(t)$. Then
\begin{equation}
\tilde{\rho}(t + \Delta t) = \tilde{\rho}(t) + \int_{t}^{t+\Delta t} dt \bigg( -i [H_1(t'),\tilde\rho(t)] + \sum_i \kappa_i \bigg(L_i(t') \tilde\rho(t) L_i^\dagger(t') - \frac{1}{2}\{L_i^\dagger(t') L_i(t'),\tilde\rho(t)\}\bigg) \bigg) + O(\Delta t^2).
\end{equation}
If we can find a QEC code whose projector $P$ satisfies 
\begin{equation}
\label{eq:QEC-2-rotating}
P L_i(t') P = \lambda_i(t') P   \quad\text{~and~}\quad P L_i(t')^\dagger L_j(t'') P = \mu_{ij}(t-t'') P + O(\Delta t^2),
\end{equation}
where $\lambda_i(t')$ and $\mu_{ij}(t-t'')$ are constants for arbitrary $t'$ and $t''$ in $[t,t+\Delta t]$, one can do quantum error corretion on $\tilde{\rho}(t + \Delta t)$ in the interaction picture, resulting in the effective dynamics
\begin{equation}
\mathcal{R}(\tilde{\rho}(t + \Delta t)) = \tilde\rho(t) -i\Big[\int_t^{t+\Delta t}dt' P H_1(t') P,\tilde{\rho}(t)\Big] + O(\Delta t^2).
\end{equation}
where $\mathcal R$ is the recovery channel. We also want the effective Hamiltonian 
\begin{equation}
\label{eq:QEC-1-rotating}
\int_t^{t+\Delta t}dt' P H_1(t') P \not\propto P 
\end{equation}
to be nontrivial.

In general, it could be difficult to find a QEC code satisfying \eqref{eq:QEC-2-rotating} and \eqref{eq:QEC-1-rotating}. However, when we have a commuting noise model, i.e., $[H_0,L_i] = [H_1,L_i] = [H_0,H_1] = [L_i,L_j] = 0$ for all $i$ and $j$, \eqref{eq:QEC-2-rotating} and \eqref{eq:QEC-1-rotating} are simply reduced to the HNLS condition (\eqref{eq:QEC-1} and \eqref{eq:QEC-2}), because $L_i(t) = L_i$ and $H_1(t) = H_1$. Therefore we can perform the QEC in the interaction picture by using the same QEC code as in the Schr\"odinger picture and rotating it accordingly.

\section{\label{app:dephasing_KL}Validity and optimality of dephasing code}

We first verify that the dephasing code 
\begin{equation}
\ket{0_\textsc{l}} = \bigotimes_{j=1}^N
\Big[
\cos(\theta_j) \ket{0_j}
+
i \sin(\theta_j) \ket{1_j}
\Big],\qquad
\ket{1_\textsc{l}} = X^{\otimes N} \ket{0_\textsc{l}}
\end{equation}
corrects all the noise in the Lindblad span [\eqref{eq:QEC-2}] when HNLS is satisfied. We begin by showing that $\ket{0_\textsc{l}}$ and $\ket{1_\textsc{l}}$ are orthonormal. Normalization is clear. Orthogonality is apparent by noting that the components of $\ket{0_\textsc{l}}$ and $\ket{1_\textsc{l}}$ on qubit $j$ are
\begin{equation}
\ket{0_{\textsc{l},j}}
=
\cos(\theta_j) \ket{0_j} + i \sin(\theta_j) \ket{1_j}
,\qquad
\ket{1_{\textsc{l},j}}
=
i \sin(\theta_j) \ket{0_j} + \cos(\theta_j) \ket{1_j}
\end{equation}
respectively, so that $\ket{0_\textsc{l}} = \otimes_{j=1}^N \ket{0_{\textsc{l},j}}$ and $\ket{1_\textsc{l}} = \otimes_{j=1}^N \ket{1_{\textsc{l},j}}$. Clearly $\ket{0_{\textsc{l},j}}$ and $\ket{1_{\textsc{l},j}}$ are orthogonal for all $j$, so $\ket{0_\textsc{l}}$ and $\ket{1_\textsc{l}}$ are also orthogonal. Note that one could choose a different phase in $\ket{0_{\textsc{l},j}}$, provided $\ket{1_{\textsc{l},j}}$ is redefined appropriately.

Next, we examine terms of the form $P (\gv v \cdot \gv Z) P$. For $N \ge 2$, the orthogonality of $\ket{0_{\textsc{l},j}}$ and $\ket{1_{\textsc{l},j}}$ implies that $\bra{0_\textsc{l}} Z_i \ket{1_\textsc{l}} =0$. On the other hand, 
\begin{equation}
    \bra{0_\textsc{l}} Z_i \ket{0_\textsc{l}} = \cos(2 \theta_i).
\end{equation}
Likewise, 
\begin{equation}
    \bra{1_\textsc{l}} Z_i \ket{1_\textsc{l}} =
    \bra{0_\textsc{l}} X^{\otimes N} Z_i X^{\otimes N} \ket{0_\textsc{l}}=
    -\bra{0_\textsc{l}}  Z_i  \ket{0_\textsc{l}} =
    -\cos(2 \theta_i).
\end{equation}
Therefore
\begin{equation}
    P(\gv v \cdot \gv Z) P = \gv v \cdot \cos(2 \gv \theta) \, Z_\textsc{l}
    =
    (\gv v \cdot \frakb^\diamond )  Z_\textsc{l}, 
    \label{eq:KL1}
\end{equation}
where the cosine is taken element-wise, $Z_\textsc{l} = \ket{0_\textsc{l}}\!\!\bra{0_\textsc{l}} - \ket{1_\textsc{l}}\!\!\bra{1_\textsc{l}}$, and we have used $\gv \theta = \frac{1}{2}\arccos \frakb^\diamond$. If $\gv v$ is an eigenvector of $C$ with non-zero eigenvalue, then $\gv v \in {\rm col}(C)$ and $\frakb^\diamond \perp {\rm col}(C)$ gives $P (\gv v \cdot \gv Z) P = 0$. 

We now consider terms of the form $P (\gv v \cdot \gv Z) (\gv u \cdot \gv Z) P$. For $N \ge 3$, the orthogonality of $\ket{0_{\textsc{l},i}}$ and $\ket{1_{\textsc{l},i}}$ implies that $\bra{0_\textsc{l}} Z_j Z_k  \ket{1_\textsc{l}} =0$. We also have
\begin{equation}
\bra{1_\textsc{l}} Z_j Z_k \ket{1_\textsc{l}}
=
\bra{0_\textsc{l}} (X^{\otimes N} Z_j X^{\otimes N} ) ( X^{\otimes N} Z_k X^{\otimes N} ) \ket{0_\textsc{l}}
=
(-1)^2 \bra{0_\textsc{l}} Z_j Z_k \ket{0_\textsc{l}}.
\end{equation}
Therefore, $P (\gv v \cdot \gv Z) (\gv u \cdot \gv Z) P \propto P$ for any $\gv v$ and $\gv u$. This shows that \eqref{eq:QEC-2} is satisfied by our dephasing code. 

Next we prove that the dephasing code given by the SDP in Eqs.~(\ref{eq:qubit-SDP-1})--(\ref{eq:qubit-SDP-2}) saturates \eqref{eq:OptimalQFI}. This in turn guarantees that it satisfies \eqref{eq:QEC-1}. 
According to the proof of \thmref{thm:commuting}, the optimal asymptotic QFI is 
\begin{equation}
F_{\rm opt}(t) = t^2 \min_{\gv \ell \in \mathfrak S_{\rm diag}} \Big\|\sum_{i=1}^N \mathfrak h_i \gv{z}_i + \gv \ell\Big\|^2_\infty \leq t^2 \min_{\gv \ell \in \mathfrak S_{\textsc v}} \Big\|\sum_{i=1}^N \mathfrak h_i \gv{z}_i + \gv \ell\Big\|^2_\infty \equiv F'_{\rm opt}(t),
\end{equation}
where $\mathfrak S_{\textsc v} = {\rm span}\{\sum_{j=1}^N (v_i)_j \gv{z_j},i\} \subseteq \mathfrak S_{\rm diag}$ and $\gv{z_j} \in \mathbb{R}^{2^N}$ is defined such that $Z_j = \text{diag}(\gv z_j)$.
A straightforward manipulation of the previous equation gives
\begin{equation}
F'_{\rm opt}(t) = t^2 \min_{\gv v \in {\rm col}(C)} \| \frakh + \gv v \|^2_{1}.
\label{eq:app_phase_F}
\end{equation}
Let $\frakb^\diamond$ be the optimal solution of the SDP given by Eqs.~(\ref{eq:qubit-SDP-1})--(\ref{eq:qubit-SDP-2}), then the dephasing code satisfies
\begin{equation}
H_\text{eff} = \frac{1}{2} P (\frakh \cdot \gv Z) P = \frac{1}{2} (\frakh \cdot \frakb^\diamond) Z_\textsc{l}.
\end{equation}
Now we prove the optimal value of Eqs.~(\ref{eq:qubit-SDP-1})--(\ref{eq:qubit-SDP-2}), $\frakh \cdot \frakb^\diamond$, is equal to 
\begin{equation}
\min_{\gv v \in {\rm col}(C)} \| \frakh + \gv v \|_{1}
\end{equation}
from \eqref{eq:app_phase_F}. It implies $F_{\rm opt}(t) = F'_{\rm opt}(t)$ is attainable using the dephasing code. This can be easily shown by noting that the dual of Eqs.~(\ref{eq:qubit-SDP-1})--(\ref{eq:qubit-SDP-2}) is \cite{boyd2004convex}:
\begin{equation}
\begin{split}
g(\lambda,\gv{v}) &= \max_{\frakh' \in \mathbb{R}^N} L(\frakh',\lambda,\gv{v}) = \max_{\frakh' \in \mathbb{R}^N} \frakh' \cdot(\frakh + \gv{v}) - \lambda \|\frakh'\|_\infty + \lambda \\ 
&= \begin{cases}
\lambda, & \lambda \geq \|\frakh + \gv{v}\|_1 \\
\infty, & \lambda < \|\frakh + \gv{v}\|_1
\end{cases},
\end{split}
\end{equation}
and the minimum of the dual is  
\begin{equation}
\min_{\lambda,\gv{v} \in {\rm col}(C)} g(\lambda,\gv{v})= \min_{\gv v \in {\rm col}(C)} \| \frakh + \gv v \|_{1},
\end{equation}
proving the optimality of the dephasing code.

Similarly, for the approximately-optimal code, which uses $\gv \theta = \frac{1}{2} \arccos \tilde{\gv \frakb} ^\diamond$ for $\tilde{\gv \frakb} ^\diamond$ from \eqref{eq:hC_tilde}, we have
\begin{equation}
P (\gv v \cdot \gv Z) P
=
(\gv v \cdot \tilde{\gv \frakb} ^\diamond) Z_\textsc{l} = 0
\end{equation}
for any $\gv v \in \text{col}(C)$, since $\tilde{\gv \frakb} ^\diamond \in \text{ker}(C)$. Therefore \eqref{eq:QEC-2} is exactly satisfied when $H \notin \mathfrak{G}$. Moreover, in this case 
\begin{equation}
H_\text{eff} = P H P = \frac{\gamma}{2} (\frakh \cdot \text{proj}_{\text{ker}(C)}\frakh ) Z_\textsc{l} \not \propto P
\end{equation}
as claimed. Therefore, the only disadvantage of using $\tilde{\frakb}^\diamond$ rather than $\frakb^\diamond$ for the code is that the former may give a slower accumulation of signal at the logical level. Note that the range of allowed $\gamma$'s is always finite when $H \notin \mathfrak{G}$, and is set by the requirements that the arccosine function be defined for each element, and that $H_\text{eff} \not \propto P$. Finally, the above calculation is nearly identical if we instead use $\gv \theta = \frac{1}{2} \arccos(\gamma \gv v_u)$, as in \eqref{eq:theta_nonHNLS}, for when the HNLS condition is not satisfied. The main effect of replacing $\text{proj}_{\text{ker}(C)} \gv \frakh \rightarrow \gv v_u$ to go beyond HNLS is that $P L_u P \propto Z_\textsc{l}$ with a non-vanishing coefficient. This makes $L_u$ an uncorrectable error, as expected.

\section{\label{app:bosonic-noise} Effective Lindblad span in lossy bosonic channel}

Consider the lossy bosonic channel
\begin{equation}
\frac{d\rho}{dt} = -i[\sum_{i=1}^s \zeta_i (a^\dagger a)^i, \rho] + \kappa \left(a\rho a^\dagger - \frac{1}{2}\{a^\dagger a,\rho\}\right).
\end{equation}
Using an arbitrary quantum error correction code to correct the boson loss, the effective dynamics up to the lowest order would be 
\begin{equation}
\frac{d\rho}{dt} = -i[\sum_{i=1}^s P \zeta_i (a^\dagger a)^i P,\rho].
\end{equation}
To sense $\omega$, we would like to filter out $(a^\dagger a)^i$ up to $s-1$ such that they act trivial in the code space. Therefore we also require
\begin{equation}
P (a^\dagger a)^i P = \lambda_i P 
\end{equation}
for $1 \leq i \leq s-1$ and arbitrary $\lambda_i$. This is equivalent to use the following Lindblad span
\begin{equation}
\mathfrak S = \{I,a,a^\dagger,(a^\dagger a)^i,1 \leq i \leq s-1\}.
\end{equation}
and the signal Hamiltonian is $(a^\dagger a)^s$. 

Note that this is not a commuting noise model because $[a,a^\dagger] \neq 0$. Therefore, we reconsider it in the setting of \appref{app:rotating} where QEC is performed in the interaction picture. First, we assume the code corrects all noise in the Lindblad span $\mathfrak S$. We also assume
\begin{equation}
P\ket{m-1}\bra{m}P = 0, 
\end{equation}
which is easy to satisfy and true for the Chebyshev code by our construction (see \appref{app:bosonic-code}). 

Let $H_0 = \sum_{i=1}^{s-1} \zeta_i (a^\dagger a)^i$ and $H_1 = \zeta_s (a^\dagger a)^s$. Then
\begin{equation}
P a(t) P = P e^{i H_0 t} a e^{-i H_0 t} P = P \left(\sum_{m=1}^\infty e^{-it\sum_{i=1}^{s-1}\zeta_i(m^i - (m-1)^i)}\sqrt{m}\ket{m-1}\bra{m}\right)P = 0,
\end{equation}
and 
\begin{equation}
\begin{split}
P a^\dagger(t) a(t') P 
= P \left(\sum_{m=1}^\infty e^{i (t-t')\sum_{i=1}^{s-1} \zeta_i \left(m^i - (m-1)^i\right)} m \ket{m}\bra{m} \right)P = \lambda (t-t') P + O\left(\Big(\Delta t \cdot \frac{\norm{H_0}}{M}\Big)^2\right).
\end{split}
\end{equation}
for some constant $\lambda$. According to \eqref{eq:QEC-2-rotating}, an enhancement by QEC could be achieved when $\Delta t$
\begin{equation}
\Delta t \cdot \norm{\kappa a^\dagger a} \ll 1 \text{~~~and~~~} \Delta t \cdot \frac{1}{M}\norm{H_0} \ll 1.
\end{equation}
It means that by performing the QEC in the interaction picture, we could relax the second condition in \eqref{eq:cond} by a factor of $M$, the upper bound of the number of bosons in the channel.

\section{\label{app:bosonic-code} Validity and near-optimality of the Chebyshev code}

We first provide two equalities which will be used later in showing the validity and optimality of the Chebyshev code: 
\begin{lemma}
\label{thm:chebyshev}
Suppose $s$ is an integer larger than one. Then we have
\begin{equation}
\sum_{k=0}^s (-1)^k \abs{c_k}^2 \left(\sin\frac{k\pi}{2s}\right)^{2i} = 0,\quad\forall 1 \leq i \leq s-1,
\end{equation}
and 
\begin{equation}
\sum_{k=0}^s (-1)^k \abs{c_k}^2 \left(\sin\frac{k\pi}{2s}\right)^{2s} = \frac{(-1)^s}{2^{2s-2}}, 
\end{equation}
where $\abs{c_k}^2 = \frac{2}{s} - \frac{1}{s}\delta_{k0} - \frac{1}{s}\delta_{ks}$.
\end{lemma}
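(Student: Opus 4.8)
The plan is to recognize both identities as instances of the discrete (Chebyshev--Gauss--Lobatto) orthogonality of Chebyshev polynomials, after a single substitution. First I would set $y_k = \cos(k\pi/s)$, so that $\sin^2(k\pi/2s) = (1-y_k)/2$ and, crucially, $(-1)^k = \cos(k\pi) = \cos\!\big(s\cdot k\pi/s\big) = T_s(y_k)$, where $T_n$ is the Chebyshev polynomial of the first kind characterized by $T_n(\cos\phi)=\cos(n\phi)$. Each sum then takes the form $\sum_{k=0}^s |c_k|^2\, T_s(y_k)\, g(y_k)$ with $g(y)=\big((1-y)/2\big)^{i}$ a polynomial of degree $i$ in $y$, and the $y_k$ are exactly the $s+1$ Gauss--Lobatto nodes.

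The second ingredient is that the weights $|c_k|^2$ are precisely the Gauss--Lobatto weights: since $|c_0|^2=|c_s|^2=1/s$ and $|c_k|^2=2/s$ otherwise, one has $\sum_{k=0}^s |c_k|^2 f(y_k) = \tfrac{2}{s}\sum_{k=0}^s{}'' f(y_k)$, where the double-prime sum halves the endpoint terms $k=0$ and $k=s$. With this convention the discrete orthogonality relation reads $\sum_{k=0}^s{}'' T_m(y_k)T_n(y_k) = \tfrac{s}{2}\,\bar c_m\,\delta_{mn}$ for $0\le m,n\le s$, with $\bar c_0=\bar c_s=2$ and $\bar c_m=1$ otherwise. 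I would either cite this standard fact or re-derive it in a line from $\sum_{k=0}^s{}'' \cos(nk\pi/s)=0$ whenever $n\not\equiv 0 \pmod{2s}$, which is a finite geometric sum, combined with the product-to-sum identity $T_m T_n = \tfrac12(\cos((m{+}n)k\pi/s)+\cos((m{-}n)k\pi/s))$.

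Given these two facts, both claims follow by expanding $g$ in the Chebyshev basis. For $1\le i\le s-1$ the polynomial $g(y)=\big((1-y)/2\big)^{i}$ has degree $i\le s-1$, hence lies in the span of $T_0,\dots,T_{s-1}$; every one of these is orthogonal to $T_s$ under the Gauss--Lobatto inner product, so the whole sum vanishes, establishing the first identity. For $i=s$ only the $T_s$-component of $g$ contributes; its coefficient is fixed by matching leading terms, since the $y^s$-coefficient of $\big((1-y)/2\big)^{s}$ is $(-1)^s/2^s$ while $T_s$ has leading coefficient $2^{s-1}$, giving a $T_s$-coefficient of $(-1)^s/2^{2s-1}$. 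Multiplying this by the diagonal norm $\tfrac{s}{2}\bar c_s = s$ and by the overall prefactor $\tfrac{2}{s}$ yields $(-1)^s/2^{2s-2}$, which is the second identity.

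The trigonometric/geometric-series evaluation behind the orthogonality relation and the leading-coefficient bookkeeping are routine. The step I would flag as the main obstacle is purely one of careful normalization: matching the half-weighted endpoints of the Gauss--Lobatto convention to the stated weights $|c_k|^2 = \tfrac{2}{s}-\tfrac{1}{s}\delta_{k0}-\tfrac{1}{s}\delta_{ks}$, and threading the constants $\tfrac{2}{s}$, $\tfrac{s}{2}$ and $\bar c_s=2$ through consistently, so that the final constant comes out as $2^{-(2s-2)}$ and not off by a stray power of two.
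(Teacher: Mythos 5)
Your proof is correct, and it takes a genuinely different route from the paper's. The paper argues by brute force: it expands $\cos^{\ell}(k\pi/s)$ via the binomial theorem into cosines of multiple angles, evaluates the alternating sums $\sum_{k=0}^{s-1}(-1)^k\cos(k\ell\pi/s)$ as geometric series, and tracks the constants through explicitly, treating the cases $\ell\le s-1$ and $\ell=s$ separately. You instead recognize $(-1)^k=T_s(y_k)$ at the Lobatto nodes $y_k=\cos(k\pi/s)$ and the weights $\abs{c_k}^2$ as the (rescaled) Gauss--Lobatto weights, so both sums become discrete inner products $\frac{2}{s}\sum_k'' T_s(y_k)\,g(y_k)$ with $g(y)=\bigl((1-y)/2\bigr)^i$; the first identity is then pure degree counting ($\deg g\le s-1$ implies orthogonality to $T_s$), and the second reduces to matching the leading coefficient of $g$ against that of $T_s$, namely $a_s=(-1)^s/2^{2s-1}$, whence $2a_s=(-1)^s/2^{2s-2}$. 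Your normalization bookkeeping checks out: $\abs{c_0}^2=\abs{c_s}^2=1/s$ and $\abs{c_k}^2=2/s$ indeed give the double-primed sum times $2/s$, the diagonal norm $\frac{s}{2}\bar c_s=s$ is right, and your constant agrees with the paper's $4(-1/4)^s$. Both arguments rest at bottom on the same finite geometric sum (your $\sum_k''\cos(nk\pi/s)=0$ for $n\not\equiv 0\pmod{2s}$ is exactly the paper's first displayed computation), but your packaging buys a conceptual explanation of why Chebyshev extrema and precisely these weights appear, and it avoids all explicit binomial-coefficient manipulation; the paper's version buys self-containedness at the cost of heavier algebra.
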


\begin{proof}
We first notice that for all $0 \leq \ell \leq s-1$,
\begin{equation}
\sum_{k=0}^{s-1} (-1)^k \cos\frac{k\ell\pi}{s} = {\rm Re}\left[\sum_{k=0}^{s-1} e^{i\big(\frac{k\ell\pi}{s}+k\pi\big)}\right] = {\rm Re}\left[ \frac{1 - (-1)^{s+\ell}}{1 + e^{i\frac{\ell\pi}{s}}}\right] = \frac{1+(-1)^{s+\ell+1}}{2},
\end{equation}
which only depends on the parity of $\ell$. Then we have
\begin{equation}
\sum_{k=0}^{s-1} (-1)^k \left(\cos\frac{k\pi}{s}\right)^\ell = \frac{1}{2^\ell} \sum_{k=0}^{s-1} (-1)^k \sum_{j=0}^\ell \binom{\ell}{j} \cos\frac{(2j-\ell)k\pi}{s} = \frac{1+(-1)^{s+\ell+1}}{2}.
\end{equation}
When $\ell = s$, 
\begin{equation}
\sum_{k=0}^{s-1} (-1)^k \left(\cos \frac{k\pi}{s}\right)^s = \frac{1}{2^\ell} \sum_{k=0}^{s-1} (-1)^k \sum_{j=0}^s \binom{s}{j} \cos\frac{(2j-s)k\pi}{s} = \frac{s}{2^{2s-1}}.
\end{equation}
Therefore,
\begin{equation}
\begin{split}
\frac{1}{2^i}\sum_{k=0}^{s-1} (-1)^k\left(1-\cos\frac{k\pi}{s}\right)^{i} &= \frac{1}{2^i}\sum_{k=0}^{s-1} (-1)^k \sum_{\ell=0}^i \binom{i}{\ell} (-1)^\ell \left(\cos\frac{k\pi}{s}\right)^\ell\\
&= \frac{1}{2^i}\sum_{\ell=0}^i \binom{i}{\ell} (-1)^\ell \left(\frac{1+(-1)^{s+\ell+1}}{2} \right) = \frac{(-1)^{s+1}}{2},
\end{split}
\end{equation}
and 
\begin{equation}
\begin{split}
\frac{1}{2^s}\sum_{k=0}^{s-1} (-1)^k\left(1-\cos\frac{k\pi}{s}\right)^{s} &= \frac{1}{2^s}\sum_{k=0}^{s-1} (-1)^k \sum_{\ell=0}^s \binom{s}{\ell} (-1)^\ell \left(\cos\frac{k\pi}{s}\right)^\ell\\
&= \frac{(-1)^{s+1}}{2} + \frac{s(-1)^s}{2^{2s-1}}.
\end{split}
\end{equation}
As a result, when $1 \leq i \leq s-1$
\begin{equation}
\sum_{k=0}^s (-1)^k \abs{c_k}^2 \left(\sin\frac{k\pi}{2s}\right)^{2i} = \frac{2}{s} \left(\sum_{k=0}^{s-1} (-1)^k\left(\sin\frac{k\pi}{2s}\right)^{2i} + \frac{(-1)^s}{2}\right)  = 0;
\end{equation}
and when $i=s$,
\begin{equation}
\sum_{k=0}^s (-1)^k \abs{c_k}^2 \left(\sin\frac{k\pi}{2s}\right)^{2s} = \frac{2}{s} \left(\sum_{k=0}^{s-1} (-1)^k\left(\sin\frac{k\pi}{2s}\right)^{2s} + \frac{(-1)^s}{2}\right)  = 4\left(\frac{-1}{4}\right)^s. 
\end{equation}
\end{proof}

The $s$-th order Chebyshev code
\begin{equation}
\ket{0_\textsc{l}} = \sum_{k \text{~even}} \tilde{c}_k \Ket{\left\lfloor M \sin^2\left( \frac{k\pi}{2s}\right) \right\rfloor},\quad 
\ket{1_\textsc{l}} = \sum_{k \text{~odd}} \tilde{c}_k \Ket{\left\lfloor M \sin^2\left( \frac{k\pi}{2s}\right) \right\rfloor}
\end{equation}
should be capable of correcting the Lindblad span
\begin{equation}
\mathfrak S = {\rm span}\{I,a,a^{\dagger},(a^\dagger a)^i,\forall 1\leq i \leq s-1\}.
\end{equation}

To correct the off-diagonal noise $a^i$ and $(a^{\dagger})^i$ up to $1 \leq i \leq t$, we simply need the distance between $\ket{0_\textsc{l}}$ and $\ket{1_\textsc{l}}$ defined by 
\begin{equation}
\textrm{dist}(\ket{0_\textsc{l}},\ket{1_\textsc{l}}) = \min_{\substack{m_0,m_1 \in \mathbb N,\\\braket{m_0|0_\textsc{l}}^2 > 0,\,\braket{m_1|1_\textsc{l}}^2 > 0}} \abs{m_0 - m_1}
\end{equation}
is larger than $2t+1$. Note that
\begin{equation}
\textrm{dist}(\ket{0_\textsc{l}},\ket{1_\textsc{l}}) \geq M \sin^2\left( \frac{\pi}{2s}\right) - 1,
\end{equation}
so the off-diagonal noise can be corrected as long as $M \sin^2 \frac{\pi}{s} \geq 2(t+1)$. Particularly, when $t=1$, we only need $\textrm{dist}(\ket{0_\textsc{l}},\ket{1_\textsc{l}}) \geq 3$, or $M \sin^2\left( \frac{\pi}{s}\right) \geq 4$.

To correct the diagonal noise $(a^\dagger a)^i$ for $1 \leq i \leq s-1$, we simply need to choose a suitbale $\{\abs{c_k}^2\}_{k=0}^{s}$ to satisfy the following $s-1$ equations
\begin{equation}
\braket{0_\textsc{l}|(a^\dagger a)^i|0_\textsc{l}} - \braket{1_\textsc{l}|(a^\dagger a)^i|1_\textsc{l}}  = \sum_{k=0}^s (-1)^k \tilde{c}_k^2 \left\lfloor M \sin^2\left(\frac{k\pi}{2s}\right) \right\rfloor^{2i} = 0,
\end{equation}
and
\begin{equation}
\sum_{k=0}^{s} (-1)^k\abs{\tilde{c}_k}^2 = 0,
\quad
\sum_{k=0}^{s} \abs{\tilde{c}_k}^2 = 2. 
\end{equation}
The linear system of equations could be written as  
\begin{equation}
\label{eq:linear}
\tilde{A} \;\gv{\tilde{c}} = \gv{e}
\end{equation}
where $\gv{\tilde{c}} = (\abs{\tilde{c}_0}^2\,\abs{\tilde{c}_1}^2\,\cdots\,\abs{\tilde{c}_{s}}^2)^T$, $\gv{e} = (0\,\cdots\,0\,1)^T$, $\tilde{A}$ is a $s+1$ by $s+1$ matrix $\tilde{A}_{ik} = (-1)^k\left\lfloor M \sin^2\frac{k\pi}{2s} \right\rfloor^{i}/M^{i}$ when $0 \leq i \leq s-1$ (we assume $0^0 = 1$) and $\tilde{A}_{sk} = 1$. \eqref{eq:linear} is solvable since $A$ is invertible, which proving the validity of the Chebyshev code.

\begin{figure}[tbp]
\includegraphics[width=11cm]{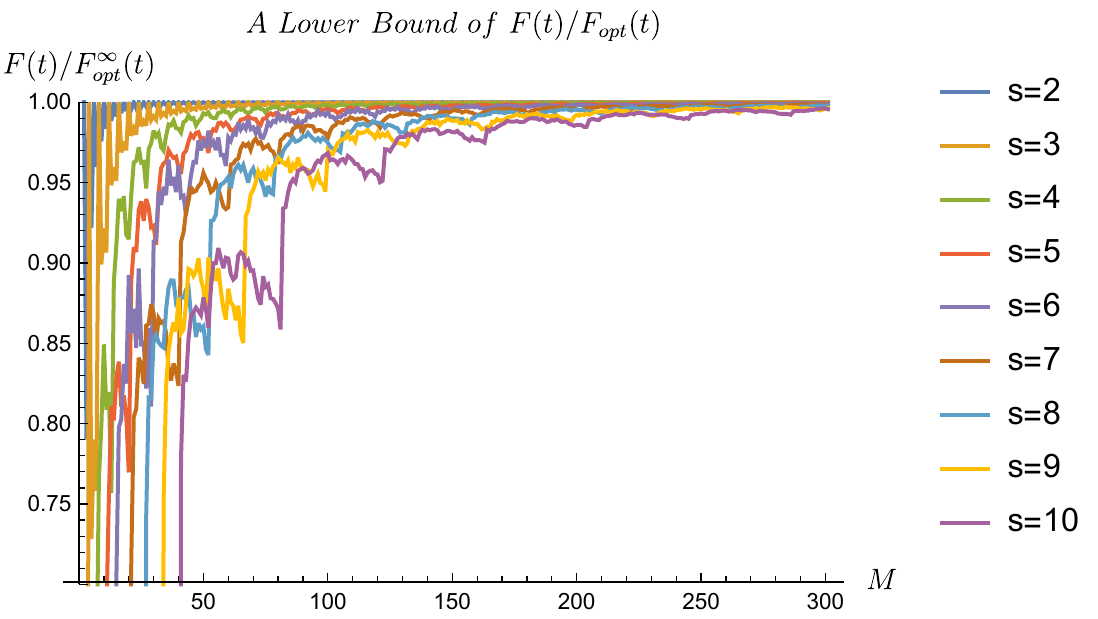}
\caption{\label{fig:nearopt} The near-optimality of the Chebyshev code. The horizonal axis indicates $M$, the largest photon allowed in the bosonic channel and the vertical axis indicates $F(t)/F^\infty_{\rm opt}(t)$, a lower bound of $F(t)/F_{\rm opt}(t)$. When $M$ is sufficiently large, $F(t)$ is very close to its optimal value $F_{\rm opt}(t)$. Also, when $s$ increases, we will need a larger $M$ to achieve the optimality.}
\end{figure}

Next we show the near-optimality of the Chebyshev code. First we calculate an upper bound of the optimal asymptotic QFI \eqref{eq:OptimalQFI}, since 
\begin{equation}
\begin{split}
\norm{(a^\dagger a)^s - \mathfrak S}
&= M^s \min_{\substack{\forall \chi_i \in \mathbb R}} \max_{\substack{k\in\mathbb Z,\\k\in[0,M]}} \bigg|\bigg(\frac{k}{M}\bigg)^s - \sum_{i=0}^{s-1} \chi_i \bigg(\frac{k}{M}\bigg)^i\bigg| \\
&\leq \bigg(\frac{M}{2}\bigg)^s \min_{\substack{\forall \chi_i \in \mathbb R}} \max_{x\in[-1,1]} \Big|x^s - \sum_{i=0}^{s-1} \chi_i x^i\Big| = 2 \bigg(\frac{M}{4}\bigg)^{s},
\end{split}
\end{equation}
we have
\begin{equation}
F_{\rm opt}(t) = 4 t^2 \min_{S \in \mathfrak S} \norm{(a^\dagger a)^s - S}^2_\infty \leq 16 t^2 \bigg(\frac{M}{4}\bigg)^{2s} \equiv F^\infty_{\rm opt}(t). 
\end{equation}

According to \lemmaref{thm:chebyshev}, 
\begin{equation}
\sum_{k=0}^s (-1)^k c_k^2 \left( \sin^2 \frac{k\pi}{2s}\right)^{i} = 0 \;\Longrightarrow A \;\gv{c} = \gv{e},
\end{equation}
where $\gv{c} = (\abs{c_0}^2\,\abs{c_1}^2\,\cdots\,\abs{c_{s}}^2)^T$, $A_{ik} = (-1)^k\left(\sin^2\frac{k\pi}{2s} \right)^{i}$ when $0 \leq i \leq s-1$ and $\tilde{A}_{sk} = 1$. Note that 
\begin{equation}
\sin^2 \frac{k\pi}{2s} - \frac{1}{M} \leq \frac{\left\lfloor M \sin^2\left(\frac{k\pi}{2s}\right)\right\rfloor}{M} \leq \sin^2\left( \frac{k\pi}{2s}\right).
\end{equation}
As $M$ becomes sufficiently large, we have
\begin{equation}
\gv{\tilde c} = {\tilde A}^{-1} \gv e = {A + (\tilde A - A)}^{-1} \gv e = (I + A^{-1}(\tilde A - A))^{-1}\gv c = \gv c + O\left(\frac{1}{M}\right).
\end{equation}
Then
\begin{multline}
\braket{0_\textsc{l}|(a^\dagger a)^s|0_\textsc{l}} - \braket{1_\textsc{l}|(a^\dagger a)^s|1_\textsc{l}} = \sum_{k=0}^s (-1)^k \tilde{c}_k^2 \left\lfloor M \sin^2\frac{k\pi}{2s} \right\rfloor^{s} \\
= \frac{(-M)^s}{2^{2s-2}} + \sum_{k=0}^s (-1)^k \left(\tilde{c}_k^2 \left\lfloor M \sin^2\frac{k\pi}{2s} \right\rfloor^{s} - c_k^2 \left( M \sin^2\frac{k\pi}{2s} \right)^{s} \right) \geq \frac{(-M)^s}{2^{2s-2}} - M^s \norm{\gv c - \gv{\tilde c}}_2^2
\end{multline}
where $\norm{\gv c - \gv{\tilde c}}_2^2 = \sum_{k=0}^{s} \abs{\tilde{c}_k - c_k}^2$ is the two-norm. 
Therefore 
\begin{equation}
\frac{F_{\rm opt}(t) - F(t)}{F_{\rm opt}(t)} \leq 1 - \frac{F(t)}{F^\infty_{\rm opt}(t)} = O\left(\frac{1}{M^2}\right),
\end{equation}
proving the near-optimality of the Chebyshev code. The numerical value of $F(t)/F^\infty_{\rm opt}(t)$ is plotted in \figref{fig:nearopt} as a lower bound of $F(t)/F_{\rm opt}(t)$. 

Consider the $(s-1,\frac{M}{s}-1)$ binomial code (suppose $M$ is a multiple of $s$)~\cite{michael2016new}
\begin{equation}
\ket{0^b_{\textsc l}/1^b_{\textsc l}} = 
\sum_{k\text{~even/odd}}^{[0,s]} \frac{1}{\sqrt{2^{s-1}}}\sqrt{\binom{s}{k}} \Ket{\frac{k}{s}M}.
\end{equation}
We have
\begin{equation}
\begin{split}
\bra{0^b_{\textsc l}}(a^\dagger a)^\ell\ket{0^b_{\textsc l}} - \bra{1^b_{\textsc l}}(a^\dagger a)^\ell\ket{1^b_{\textsc l}} &= \frac{M^\ell}{2^{s-1}s^\ell} \sum_{k=0}^s \binom{s}{k} k^\ell (-1)^k 
\\
&= \frac{M^\ell}{2^{s-1}s^\ell} \left(x\frac{d}{dx}\right)^\ell (1+x)^{s}\bigg|_{x=-1} = \begin{cases}  0, & \ell = 1,\ldots,s-1,\\
\frac{(-1)^s s! M^s}{2^{s-1}s^s}, & \ell = s. 
\end{cases}
\end{split}
\end{equation}
Clearly the $(s-1,\frac{M}{s}-1)$ binomial code also corrects the Lindblad span, but the stregth of the signal is exponentially smaller with respect to $s$:
\begin{equation}
\frac{F^b(t)}{F_{\rm opt}(t)} \approx \left(\frac{2^{s-1}s!}{s^s}\right)^2 = O\left(s\left(\frac{2}{e}\right)^{2s}\right).
\end{equation}

\section{\label{app:dephasing_Leff}Effective dynamics under the dephasing code}

As discussed in the main text, we find the achievable sensitivity offered by the dephasing qubit code outside the HNLS limit by computing its effective dynamics under frequent QEC recoveries \cite{layden2018spatial}. For $\Delta t \ll T_2$, this dynamics is generated by the effective Liouvillian $\mathcal{L}_\text{eff} = \mathcal{R} \circ \mathcal{L} \circ \mathcal{P}$, which we construct explicitly here for the qubit code defined by Eqs.\ (\ref{eq:phase_code}) and (\ref{eq:theta_nonHNLS}).

We begin by constructing the recovery operation $\mathcal{R}$. The standard QEC recovery procedure (i.e., the transpose channel) is the following: The projector $P$ onto the codespace, together with the correctable errors for our code $\{ L_i\}_{i \neq u}$, define a set of rank-two projectors $\{ P_i \}_{i \neq u}$ and corresponding unitaries $\{ U_i \}_{i \neq u}$ \cite{zhou2018achieving}. To leading order in $\Delta t/T_2$, the correctable errors cause the sensor to jump into the subspaces defined by the $P_i$'s. The standard/transpose recovery consists of measuring in $\{ P, P_1, P_2, \dots \}$ (not containing a $P_u$), and applying $U_i^\dagger$ if the state is found in col$(P_i)$, for $i \neq u$ \cite{nielsen2010quantum, lidar:2013}. In the present setting, however, this procedure is problematic. The issue is that the uncorrected error $L_u$ can cause the state to jump into the ``remainder" subspace, with projector $P_\textsc{r} = I - P - \sum_{i\neq u} P_i$. To avoid population leakage from the codespace into col($P_\textsc{r}$) at leading order in $\Delta t/T_2$, we modify the usual procedure by returning the state to the codespace in the event of an error $L_u$, even though this error cannot---by design---be fully corrected. This gives an $\mathcal{L}_\text{eff}$ with non-trivial dynamics only in the 2-dimensional codespace, which lets us analyze the sensitivity using \eqref{eq:eta}, i.e., as though it were a two-level system.

Concretely, let us first define Knill-Laflamme coefficients $m_{ij} \in \mathbb{R}$ by $P L_i^\dagger L_j P = m_{ij} P$, for all $i, j \in {1, \dots, N}$. We also define the $N\times N$ matrix $M = (m_{ij})_{i,j=1}^N$, and the $(N-1)\times(N-1)$ submatrix $\tilde{M} = (m_{ij})_{i,j \neq u}$, which is equal to $M$ with the $u^\text{th}$ row and column removed. [Recall that $u$ is the index of the noise mode left uncorrected, as per \eqref{eq:theta_nonHNLS}.] Then, let $W$ be an orthogonal matrix diagonalizing $\tilde{M}$, such that $W^\top \tilde{M} W = \text{diag}(d_1, d_2, \dots)$. This $W$ lets us define new error operators $\{F_i\}_{i\neq u}$ such that
\begin{equation}
F_k = \sum_{i\neq u} w_{ik} L_i
\qquad \text{and} \qquad
L_j = \sum_{k \neq u} w_{jk} F_k,
\end{equation}
which satisfy $P F_i F_j P = \delta_{ij} d_i P$. For $j\neq u$, we then use $F_j$ to define a unitary $U_j$ via polar decomposition, such that $F_j P = \sqrt{d_j} U_j P$, and finally $P_j = U_j P U_j^\dagger$. (If $d_j=0$ take $U_j = 0$.) So far we have followed Refs.\ \cite{nielsen2010quantum, lidar:2013}. However, we now define an additional unitary $U_u$ via the polar decomposition of $P_R L_uP$:
\begin{align}
P_R L_u P &= U_u \sqrt{(P_R L_u P)^\dagger (P_R L_u P)} = \sqrt{m_{uu} - \lambda_u \gamma^2 - \alpha} \; U_u P,
\label{eq:u_polar}
\end{align}
(taking $U_u=0$ if $P_R L_uP=0$) for some constant $\alpha$. Concretely, $\alpha$ is defined through
\begin{align}
    P L_u^\dagger P_\textsc{r} L_u P
    &=
    P L_u^2 P - (P L_u P)^2 - \sum_{i \in \mathcal{I}} P L_u U_i P U_i^\dagger L_u P\\
    &=
    (m_{uu} - \lambda_u \gamma^2) P 
    -
    \sum_{i \in \mathcal{I}} \frac{1}{|d_i|} P L_u L_i P L_i L_u P \nonumber \\
    &=:
    (m_{uu} - \lambda_u \gamma^2 - \alpha) P. \nonumber
\end{align}
where $\mathcal{I}=\{i \, |\,  i \neq u,\, d_i \neq 0 \text{ and } 1 \le i \le N \}$. Our modified recovery channel then consists of measuring in $\{P, P_1, \dots, P_N\}$, where $P_j \equiv U_j P U_j^\dagger$ for all $j \in \{1, \dots, N\}$ (now including $j=u$). [\eqref{eq:u_polar} immediately implies that these projectors satisfy $P_i P_j = \delta_{ij} P_i$ for all $0 \le i,j \le N$, where $P_0 := P$.] The correction step is then: If the state is found in the codespace, do nothing; if it is in $\text{col}(P_i)$ for $1 \le i \le N$, apply $U_i^\dagger$. In other words:
\begin{equation}
\mathcal{R}(\rho) = P \rho P + 
\sum_{i = 1}^N U_i^\dagger P_i \rho P_i U_i.
\end{equation}

Having defined $\mathcal{R}$, we now compute $\mathcal{L}_\text{eff}$. It is convenient to define superoperators $\mathcal{H}$ and $\mathcal{D}_j$ such that the sensor's Liouvillian takes the form
\begin{equation}
\mathcal{L}(\rho) 
=
-i 
\omega \underbrace{[H, \rho]
\vphantom{\Big(\frac{1}{2} \{ L_j^2, \rho \}\Big)}}_{\mathcal{H}(\rho)}
+
\frac{1}{2T_2}\sum_{j=1}^N
\underbrace{
\Big(
L_j \rho L_j
- \frac{1}{2} 
\{ L_j^2, \rho \}
\Big)
}_{\mathcal{D}_j (\rho)} .
\end{equation}
We begin by computing $\mathcal{R H P}$. The fact that $P_j$'s are mutually orthogonal (for $0 \le j \le N$) immediately implies that
\begin{align}
\mathcal{R H P}(\rho)
=
[\mathcal{P}(H), \, \mathcal{P}(\rho)]
&=
 \frac{\gamma (\gv v_u \cdot \frakh )}{2} [Z_\textsc{l}, \mathcal{P}(\rho)] \nonumber,
\end{align}
using \eqref{eq:KL1}. For a logical $\rho = \mathcal{P}(\rho)$ may therefore write $\mathcal{R H P}(\rho) = [H_\text{eff}, \rho]$, where 
\begin{equation}
H_\text{eff} = \frac{\gamma (\gv v_u \cdot \frakh )}{2} Z_\textsc{l},
\end{equation}
as in \appref{app:dephasing_KL}.

We now turn to $\mathcal{R D}_j \mathcal{P}$ for $j \neq u$, which we expect to vanish since the code corrects the associated error $L_j$ by design. Assuming a logical $\rho$ to simplify the notation, we have
\begin{equation}
\mathcal{R D}_j \mathcal{P}(\rho)
=
\underbrace{P L_j \rho L_j P
\vphantom{\sum_{i=1}^N} 
}_\text{(i)}
- 
\underbrace{\frac{1}{2} P \{ L_j^2, \rho \} P
\vphantom{\sum_{i=1}^N} 
}_\text{(ii)}
+
\underbrace{\sum_{i=1}^N U_i^\dagger P_i L_j \rho L_j P_i U_i}_\text{(iii)}
\end{equation}
where the other terms vanish because the $P_j$'s are all orthogonal. Term (i) vanishes since $PL_j P=0$ for $j\neq u$, while term (ii) equals $m_{jj} \, \rho$. To evaluate term (iii), we first note that $U_u^\dagger P_u L_j P=0$. The equality holds trivially when \eqref{eq:u_polar} vanishes, and when it does not we have
\begin{align}
U_u^\dagger P_u L_j P
=
(m_{uu} - \lambda_u \gamma^2 - \alpha)^{-1/2} \; P L_u P_\textsc{r}
\sum_{i \in \mathcal{I}} \frac{w_{ji}}{\sqrt{d_i}} P_i U_i 
= 0
\end{align}
since $P_\textsc{r} P_i = 0$. This leaves
\begin{align}
\text{term (iii)}
&=
\sum_{i \in \mathcal{I}}
U_i^\dagger P_i L_j P\rho P L_j P_i U_i\\
&=
\sum_{i \in \mathcal{I}} \frac{1}{|d_i|} P F_i L_j P \rho P L_j F_i P \nonumber \\
&=
\sum_{i \in \mathcal{I}} \sum_{k, \ell \neq u} \frac{1}{|d_i|} w_{jk} w_{j\ell} \delta_{ik} \delta_{i \ell} d_k d_\ell  \rho   \nonumber \\
&= m_{jj} \, \rho. \nonumber
\end{align}
We therefore have the expected result $\mathcal{RD}_j \mathcal{P}=0.$

Finally, we turn our attention to $\mathcal{R D}_u \mathcal{P}$, which we do not expect to vanish (unless we are in the HNLS limit where $L_u = 0$), since we have designed our code such that $L_u$ is uncorrectable. As before, we have
\begin{equation}
\mathcal{R D}_u \mathcal{P}(\rho)
=
\underbrace{P L_u \rho L_u P
\vphantom{\sum_{i=1}^N} 
}_\text{(I)} 
- 
\underbrace{\frac{1}{2} P \{ L_u^2, \rho \} P
\vphantom{\sum_{i=1}^N} 
}_\text{(II)}
+
\underbrace{\sum_{i=1}^N U_i^\dagger P_i L_u \rho L_u P_i U_i}_\text{(III)}.
\end{equation}
From \eqref{eq:KL1}, the first term simplifies to
\begin{equation}
\text{term (I)} = \lambda_u \gamma^2 \, Z_\textsc{l} \rho  Z_\textsc{l},
\end{equation}
while term (II) becomes $m_{uu} \, \rho$. Treating the $i\neq u$ and $i=u$ parts of term (III) separately, we have
\begin{align}
\sum_{i \in \mathcal{I}} U_i^\dagger P_i L_u P L_u P_i U_i
=
\sum_{i \in \mathcal{I}} \frac{1}{|d_i|} P L_i L_u P \rho P L_u L_i P 
= \alpha \rho,
\end{align}
so
\begin{equation}
\mathcal{R D}_u \mathcal{P}(\rho)
=
\lambda_u \gamma^2 Z_\textsc{l} \rho Z_\textsc{l}
+
(\alpha - m_{uu}) \rho + U_u^\dagger P_u L_u \rho L_u P_u^\dagger U_u.
\label{eq:missing_lindblad}
\end{equation}
One immediately sees from \eqref{eq:missing_lindblad} that were it not for the $P_u$ measurement and feedback that we have added to $\mathcal{R}$ (the last term in the above equation), $\mathcal{L}_\text{eff}$ would not be of Lindblad form over the codespace, due to leakage into $\text{col}(P_\textsc{r})$. This is manifest through the mismatch between the $\lambda_u \gamma^2$ and $(\alpha - m_{uu})$ coefficients [cf.\ \eqref{eq:lindblad}]. By design, however, we have
\begin{align}
    U_u^\dagger P_u L_u P
   &=
    P U_u^\dagger L_u P\\
    &=
    (m_{uu} - \lambda_u \gamma^2 - \alpha)^{-1/2} \; P L_u P_\textsc{r} L_u P \nonumber \\
    &=
    \sqrt{m_{uu} - \lambda_u \gamma^2 - \alpha} \; P \nonumber
\end{align}
from \eqref{eq:u_polar}, which cancels the mismatched $\alpha-m_{uu}$ term in \eqref{eq:missing_lindblad}, giving the valid, trace-preserving Lindblad dissipator
\begin{equation}
\mathcal{R D}_u \mathcal{P}(\rho)
=
\lambda_u \gamma^2 \big( Z_\textsc{l} \rho Z_\textsc{l} - \rho \big).
\end{equation}

In summary, then, the effective logical dynamics in the limit of frequent recoveries is generated by
\begin{equation}
\mathcal{L}_\text{eff}(\rho)
=
-i[\omega H_\text{eff}, \rho] 
+ L_\text{eff} \, \rho  L_\text{eff}^\dagger 
-\frac{1}{2} \{ L_\text{eff}^\dagger L_\text{eff}, \rho \},
\end{equation}
where $H_\text{eff} = \gamma (\gv v_u \cdot \frakh ) Z_\textsc{l}/2$ and $L_\text{eff} = \gamma \sqrt{\frac{\lambda_u}{2 T_2}} Z_\textsc{l}$, with $Z_\textsc{l} = \ket{0_\textsc{l}} \!\! \bra{0_\textsc{l}} - \ket{1_\textsc{l}} \!\! \bra{1_\textsc{l}}$. In other words, at the logical level, the sensor accumulates a phase at a rate set by $\gamma$ and the overlap of $L_u$ with $H$, while also losing phase coherence at a rate set by $\gamma$ and $\lambda_u$. \eqref{eq:eta_QEC} follows immediately from this result.

\section{\label{app:dephasing_ex}Dephasing code example}

Consider a sensor comprising $N \ge 3$ identical probing qubits arranged in a ring, and equally spaced from one another. Suppose that the noise correlation coefficient between two qubits depends only on the distance between them, so that neighboring qubits have a coefficient of $\alpha_1$, next-to-nearest neighbors have $\alpha_2$, and so on up to $\alpha_\text{max}$ for the most distant qubits, as shown in \figref{fig:ring}. (N.b., the special case of $\alpha_i=0$ describes a lack of correlation.) We emphasize that in practice, the distance between two qubits is a poor predictor of how strongly correlated the noise in their gaps is. As discussed in the main text, other factors, like proximity and relative orientations to nearby fluctuators, for instance, are often more important. Accordingly, this example is not meant to provide a particularly realistic model, but rather, an illustrative one which can be solved exactly, and which has normal noise modes with a simple physical interpretation.

\begin{figure}[htbp] 
\includegraphics[width=0.9\textwidth]{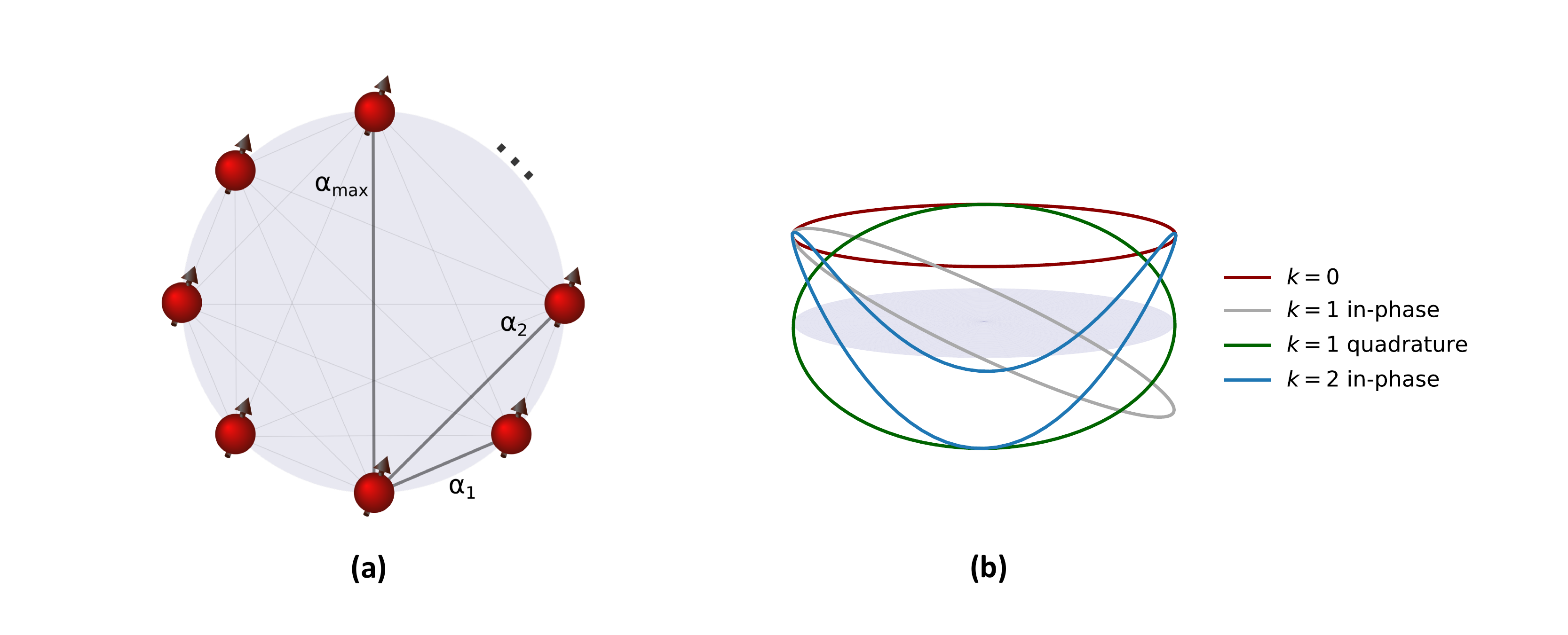}
\caption{(a) A ring of $N\ge3$ evenly-spaced probing qubits. The lines connecting each pair denote an arbitrary correlation strength $\alpha_i$ in the energy gap noise felt by either qubit. (b) The first few Fourier modes in the sensor resulting from the spatial noise correlations. (Qubits not shown.) The height of each point along a curve denotes the weight of $Z_j$ for qubit $j$ in the corresponding Lindblad error operator, for modes $\gv v_0$, $\gv v_{1,0}$, $\gv v_{1,\pi/2}$, $\gv v_{2,0}$.}
\label{fig:ring}
\end{figure}

The qubits in this sensor are assumed to undergo Markovian dephasing described by \eqref{eq:dephasing_lindblad}, with $\mathfrak h_j=1$ for simplicity and 
\begin{equation}
    C
    =
    \begin{pmatrix}
    1 & \alpha_1 & \alpha_2 &  & \alpha_2 & \alpha_1 \\
     \alpha_1 & 1 & \alpha_1 & \cdots & \alpha_3 & \alpha_2 \\
    \alpha_2 & \alpha_1 & 1 & & \alpha_4 & \alpha_3 \\
    & \vdots & & \ddots  & & \vdots \\
    \alpha_2 & \alpha_3 & \alpha_4 & & 1 & \alpha_1 \\
    \alpha_1 & \alpha_2 & \alpha_3 & \cdots & \alpha_1 & 1
    \end{pmatrix},
\end{equation}
where we assign qubit numbers/labels sequentially based on their location. Notice that $C$ is a circulant matrix, so it is diagonalized by a discrete Fourier transform matrix \cite{davis1979circulant}. This means that the eigenvectors of $C$ are simply the spatial Fourier modes on the ring. These are often chosen to be complex vectors of the form
\begin{equation}
\gv w_k = \frac{1}{\sqrt{N}} (1, \omega_k, \omega_k^2, \dots, \omega_k^{N-1})^\top,
\end{equation}
where $\omega_k:=\exp(2\pi i k/N)$ for $k=0,\dots,N-1$. The corresponding eigenvalues are 
\begin{equation}
\lambda_k = 1+\alpha_1 \omega_k + \alpha_2 \omega_k^2 + \dots
+ \alpha_2 \omega_k^{N-2} + \alpha_1 \omega_k^{N-1} 
\in \mathbb{R}.
\end{equation}
Since these eigenvalues come in degenerate pairs ($\lambda_k = \lambda_{N-k}$ for $k \ge 1$), we can equivalently form a real eigenbasis for $C$ from $\gv w_k \pm \gv w_{N-k}$, in keeping with the convention from the main text of using $\gv v_k \in \mathbb{R}^N$. The first such eigenvector is the $k=0$, or constant Fourier mode
\begin{equation}
    \gv v_0 = \frac{1}{\sqrt{N}} (1,\dots,1)^\top = \frakh /\sqrt{N},
\end{equation}
Higher wavenumbers $k \ge 1$ each describe a pair of eigenvectors with a $\pi/2$ phase offset on the ring:
\begin{align}
\gv v_{k,0}
&=
\big(1, \, \cos( \theta_k), \, \cos( 2\theta_k),
\dots, \,
\cos[(N-1)\theta_k] \big)^\top\\
\gv v_{k,\pi/2}
&=
\big(0, \,\sin( \theta_k), \,\sin( 2\theta_k),
\dots, \,
\sin[(N-1)\theta_k] \big)^\top, \nonumber
\end{align}
where $\theta_k = 2\pi k/N$. The first few of these Fourier modes are illustrated in \figref{fig:ring}. The key observation here is that all but the $k=0$ noise mode are orthogonal to $\frakh$. This forces us to take $\gv v_u = \gv v_0$ as the mode to leave uncorrected. Any other choice of $u$ would give $\eta_\text{QEC}^{(u)} = \infty$, and is therefore inadmissible. The eigenvalue of $C$ associated with this mode is given by
\begin{equation}
    \lambda_u = \gv v_u^\top C \gv v_u =  \frac{1}{N} \, \frakh ^\top C \frakh.
\end{equation}
From \eqref{eq:eta_QEC}, this gives an achievable sensitivity of 
\begin{equation}
\eta_\text{QEC} = \frac{\sqrt{\frakh^\top C \ \frakh }}{N} \sqrt{\frac{2e}{T_2}} 
=
\sqrt{1 + \alpha_1 + \alpha_2 + \dots + \alpha_\text{max} } \sqrt{\frac{2e}{N T_2}}.
\end{equation}
Notice that the scaling of $\eta_\text{QEC}$ with $N$ here has two components: (i) a generic $1/\sqrt{N}$ improvement as one adds qubits, and (ii) a prefactor that depends on how $\sum_i \alpha_i$ changes with $N$.

Comparing with $\eta_\text{par} = \sqrt{2e /N T_2}$, one sees that QEC provides an advantage over a parallel sensing scheme (i.e., one with initial state $\ket{\psi_0} = \ket{+}^{\otimes N}$) when $\alpha_1 + \dots +\alpha_\text{max} < 0$. Therefore, in this particular setting, there must be negative correlations in the noise (i.e., certain $\alpha_j < 0$) for QEC to provide an advantage over parallel sensing. This is not required in general however, see e.g., the Supplementary Information for Ref.~\cite{layden2018spatial} for a counterexample.

Finally, we compare the sensitivity offered by QEC in this model with that offered by a GHZ scheme which uses an initial state $\ket{\psi_0} = \frac{1}{\sqrt{2}} (\ket{0\dots 0} + \ket{1\dots 1} )$. \eqref{eq:eta_GHZ} of the main text immediately gives
\begin{equation}
\eta_\text{GHZ} = \frac{\sqrt{\frakh ^\top C \frakh }}{N} \sqrt{\frac{2e}{T_2}}  = \eta_\text{QEC}.
\end{equation}
Therefore, with this particular noise model, our QEC scheme offers the same sensitivity as GHZ sensing (in the $\Delta t \rightarrow 0$ limit) but does not surpass it. There is a simple explanation for this apparent coincidence: taking $\gv v_u = \gv v_0$ and $\gamma = \gamma_\text{max} = \sqrt{N}$ gives logical states $\ket{0_\textsc{l}} = \ket{0\dots 0}$ and $\ket{1_\textsc{l}} = \ket{1\dots 1}$, resulting in the same initial state as the GHZ scheme. Moreover, a simple calculation shows that both of these logical states are in the kernel of every Lindblad error with $k\ge 1$. In other words, span$ \{ \ket{0\dots 0}, \ket{1\dots 1} \}$ is a decoherence-free subspace of $\{ L_k \}_{k \ge 1}$, and so the recovery $\mathcal{R}$ reduces to the identity channel (i.e., doing nothing) \cite{layden2018spatial}. This means that for this model, GHZ sensing is a special case of our error-correction scheme, for a particular choice of the adjustable parameter $\gamma$. Since $\eta_\text{QEC}$ is independent of $\gamma$, it follows that $\eta_\text{QEC} = \eta_\text{GHZ}$ here. Of course, this is particular to the present model, and is not the case in general, even with purely positive noise correlations \cite{layden2018spatial}.

\end{document}